\def\Z2{\mathbb Z_2}
\def\b|{\big \|}
\def\dist{\textrm{dist}}
\def\supp{\textrm{supp}}
\def\om{\omega}
\def\l{\lambda}
\def\L{\Lambda}
\theoremstyle{definition}
\newtheorem{Lemma}{Lemma}
\newtheorem{Theorem}[Lemma]{Theorem}
\newtheorem{Corollary}[Lemma]{Corollary}
\begin{document}
\title{Logarithmic lightcones in the multiparticle Anderson model with sparse interactions}

\author{Daniele Toniolo\,\orcidlink{0000-0003-2517-0770}}
\email[]{d.toniolo@ucl.ac.uk}
\email[]{danielet@alumni.ntnu.no}
\affiliation{Department of Physics and Astronomy, University College London, Gower Street, London WC1E 6BT, United
Kingdom}

\author{Sougato Bose\,\orcidlink{0000-0001-8726-0566}}
\affiliation{Department of Physics and Astronomy, University College London, Gower Street, London WC1E 6BT, United
Kingdom}


\begin{abstract}
We prove that the dynamics of the one-dimensional $ XY $ model with random magnetic field perturbed by a sparse set of $ ZZ $ terms with a large coupling constant $ \Delta $ gives rise to Lieb-Robinson (L-R) bounds with a logarithmic lightcone and amplitude proportional to $ \Delta^{-1} $.  These spin systems are equivalent to a set of spinless lattice fermions subjected to a random on site potential and sparse density-density interactions. In the absence of the random magnetic field we also obtain a suppression of the L-R bounds as $ \Delta^{-1} $. These results follow from the application of a general theorem about the L-R bound of a generic local time-dependent one-dimensional spin system with local time-dependent perturbations. Adopting the interaction picture of the dynamics, the large and sparse $ ZZ $ perturbations of the $ XY $ model, with or without disorder, are mapped into high-frequency periodic perturbations. All our results are non-perturbative. 
\end{abstract}

\maketitle

\tableofcontents


\section{Introduction}

Lieb-Robinson (L-R) bound provides both a qualitative and a quantitative description of the dynamics of a quantum many-body system. In the language of spin systems given an operator that is acting non trivially on a given bounded set of spins, this region is called its support, the L-R bound gives an upper bound on the spread of its support in time as a consequence of unitary evolution in the Heisenberg picture. The pioneering work of Lieb and Robinson \cite{Lieb_Robinson} has been improved in the last decades especially under the impulse of Hastings \cite{Hastings_2004}. A vast amount of work on this subject has been put forward; recent reviews on the subject are \cite{Gong_2022, Lucas_review}.

The object of the present work is studying how local perturbations to an Hamiltonian, both in general time-dependent, modify the L-R bound of the starting dynamics. We have already provide a result concerning the stability of slow dynamics in our work \cite{Toniolo_2024_1}, here considering specific models as the $ XY $ model with and without a random magnetic field, perturbed by one or a sparse set of $ ZZ $ terms we are able to obtain a finer result, showing that the amplitude of a logarithmic lightcone becomes proportional to the inverse of the intensity of the perturbation. We give an intuitive justification of this result at the beginning of section \ref{applications}, going from the spin to the particle picture using a Jordan-Wigner transformation. We obtain these results as the application of a general theorem on the modifield L-R bound given by a time-dependent perturbations of local Hamiltonian that we prove in \ref{Floquet_Perturbation}. We will compare our result with that of \cite{Gebert_2022} in the remark after corollary \ref{left_right}. Another work that considers L-R bounds in the context of many-body systems subjected to random fields and possibly interacting is \cite{Nach_2019}.

The Anderson localized phase in one dimension has been characterized, from the dynamical point of view, with the L-R bound of the $ XY $ model subjected to a random magnetic field \cite{Sims_Stolz_2012} that is uniform in time, meaning that there is no spread of the support of operators in the Heisenberg picture. That also implies a uniform bound in time for the dynamically generated entanglement entropy \cite{Nach_2016}. For a general review on the mathematics of the Anderson model see for example \cite{Stolz_review}. Despite  the vast theoretical research on interacting spin models subjected to random fields, the prototipical example being the one-dimensional $ XXZ $ with a random magnetic field, the majority of the results is numerical. A recent rigorous result \cite{Elgart_2023} shows that, at least in the low energy subspace, a logarithmic lightcone appears. We stress that this is still a lower bound, and up to our knowledge a proof of the fact that the Anderson model is unstable with respect to interactions, at least in some parameter regimes, is missing. The logarithmic in time growth of the dynamical entanglement entropy from a dynamics giving rise to a logarithmic lightcone has been proven in \cite{Zeng_2023} and \cite{Toniolo_2024_3}. The instability of the Anderson model, from the point of view of the so called orthogonality catastrophe, has been show in \cite{Dietlein_2019}.

This work is organized as follows: in section \ref{theo} we prove a general theorem on the modification of L-R bound by perturbations that are local and possibly time-dependent, this result turns out particularly useful in the  study of high frequency local perturbations. In section \ref{applications} we will realized that  a single (or a sparse set of) $ ZZ $ perturbation of the $ XY $ model, with or without random magnetic field, looks like a high frequency perturbation in the interaction picture of the dynamics, so we will apply the result of \ref{theo} to show how logarithmic lightcones emerge with suppressed amplitude. We will also comment on the extension from one perturbation to many perturbations. In one appendix \ref{induction} we will present a particularly concise proof of the L-R  bound for a generic time-dependent nearest neighbour Hamiltonian.


\section{Bounded and time-dependent perturbations of a local dynamics} \label{theo}

\begin{Theorem} \label{Floquet_Perturbation} 
{\it The Hamiltonian $ E(t) $ of a local, one-dimensional spin $ 1/2 $ system, over the lattice $ \{-L,\dots, L\} $ and Hilbert space $ \bigotimes_{j=-L}^{L} \mathds{C}^2 $, in general time-dependent, is assumed to  
give rise to a Lieb-Robinson bound for any pair of operators $ A $ and $ B $, with bounded supports such that $ l:= \dist(\supp \, A,\supp \, B) $:
\begin{align} \label{E_L-R}
\b| [T^*_E(t,s) \, A \,T_E(t,s),B] \b| \le     \|A\|  \|B\|   \, f(t,s) \, e^{-\frac{l}{\xi}}
\end{align}
where we have denoted $ f(\cdot,\cdot) $ a positive function, and the time-ordered operator of unitary operator of time evolution 
\begin{align} \label{t_order}
  T_E(t,s) :=  \mathcal{T} \left( e^{ -i \int_s^t du E(u)} \right)
\end{align}
With a time dependent generator $ E(t) $, the time-ordered unitary $ T_E(t_f,t_i) $ depends in general on both the initial time $ t_i $ and the final time $ t_f $ of the evolution. To simplify the notation we set $ f(t):=f(t,0) $, also $ T_H(t):=T_H(t,0) $, for any Hermitian $ H $.

Let us consider an operator $ \l(t) C $, with $ \l(t) $ a real function with a primitive function $ \L(t) $, and $ C $ a Hermitian operator with bounded support, such that $ [E(t),C] \neq 0 $ and that at least one among $ [A,C] $ and $ [B,C] $ is vanishing. We also assume that for almost every $ s $, the $\supp \, [E(s),C] $ is independent from $ s $. Denoting with $ G(t) := E(t) + \l(a t) C $, $ a \ge 0 $, the total Hamiltonian and with $ T_G(t) $ the corresponding operator of unitary evolution, it holds:
\begin{align} \label{direct}
 \b| [T^*_G(t) \, A \,T_G(t),B] \b|  \le \|A\|  \|B\| \, f(t) \, e^{-\frac{l}{\xi}} + 2 \|A\|  \|B\| \|C\| \bar{f}(t)\, e^{-\frac{r}{\xi}}
\end{align}
with 
\begin{align}
 \bar{f}(t) := \max \big \{\int_0^t ds |\l(as)| f(s,0) \,, \int_0^t ds |\l(as)| f(t,s) \big \}
\end{align}
and
\begin{align} \label{r_def}
 r:= \max \{\dist(\supp \, A,\supp \, C),\dist(\supp \, B,\supp \, C)\}
\end{align}

It also holds:
\begin{align} \label{by_part}
\b| [T^*_G(t) \, A \,T_G(t),B] \b| \le   \|A\|  \|B\|   \, f(t) e^{-\frac{l}{\xi}}
 + \sup_{s \in [0,t]} \frac{|\L(as)|}{a} 2 \|A\|  \|B\| \tilde{f}_C(t) e^{-\frac{d}{\xi}} 
\end{align}
with 
\begin{align}\label{I_L-R}
 \tilde{f}_C(t) :=  \|C\|  f(t) + \max \Big \{
  \int_0^t ds f(s,0) \left( 2 |\l(as)| \|C\|^2   + \|[E(s),C]\| \right) \, , \, \int_0^t ds f(t,s) \left( 2 |\l(as)| \|C\|^2   + \|[E(s),C]\| \right) \Big \}
\end{align}
and
\begin{align} \label{d_def}
 d:= \max \{\dist(\supp \, A,\supp \, [E(s),C]),\dist(\supp \, B,\supp \, [E(s),C])\}
\end{align}
}

{\it Remarks}. We have provided two L-R bounds for the same dynamics. The bound in equation \eqref{direct} is the one for $ a=0 $ (meaning that the perturbation $ \lambda(0) \, C $ is time-independent) or $ a \ll 1 $, instead \eqref{by_part} ideally applies when $ a \gg 1 $.

For the intermediate values of $ a $, say $ a = O(1) $ we cannot identify in general which one among the bounds \eqref{direct} and \eqref{by_part} would be the better one, therefore this has to be identified case by case given the specific model theorem \ref{Floquet_Perturbation} is applied to. Applications of theorem \ref{Floquet_Perturbation} to the Anderson model and the $ XY $ model perturbed by a single $ ZZ $ term are provided in lemma \ref{single_ZZ}, see also the informal version of it, and corollary \ref{single_ZZ_no_dis}.

The assumption about the relative position of the supports of $ A $, $ B $ and $ C $, simply means that the support of $ C $ does not ``connect'' the supports of $ A $ and $ B $, but it can overlap with one of them, as schematically represented in figure \ref{supports}.

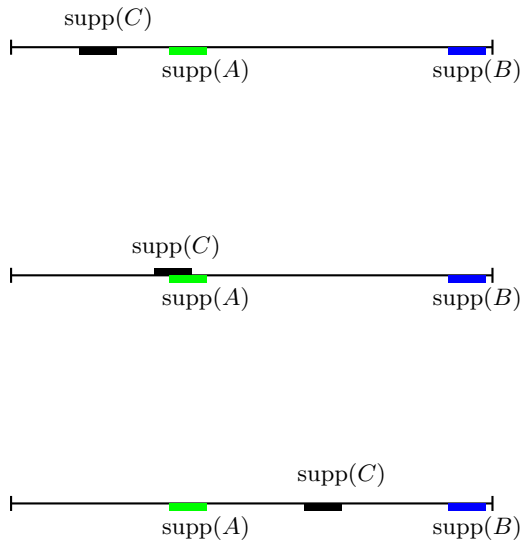
\begin{figure}[h]
\setlength{\unitlength}{1mm}


\setlength{\unitlength}{1mm} 
\begin{picture}(140,30)(10,0)

\thicklines

\put(41,17){\line(0,1){2}}
\put(105,17){\line(0,1){2}}

\put(105,18){\line(-1,0){64}}

\linethickness{1mm}
\put(55,17.5){\line(-1,0){5}}

{\color{green}\put(67,17.5){\line(-1,0){5}}}

{\color{blue}\put(103,17.5){\line(-1,0){5}}}

\thinlines

\put(47,21){$ \supp(C) $}
\put(60,14){$ \supp(A) $}
\put(96,14){$ \supp(B) $}



\end{picture}


\setlength{\unitlength}{1mm} 
\begin{picture}(140,30)(10,0)

\thicklines

\put(41,17){\line(0,1){2}}
\put(105,17){\line(0,1){2}}

\put(105,18){\line(-1,0){64}}

\linethickness{1mm}
\put(65,18.5){\line(-1,0){5}}

{\color{green}\put(67,17.5){\line(-1,0){5}}}

{\color{blue}\put(103,17.5){\line(-1,0){5}}}

\thinlines

\put(56,21){$ \supp(C) $}
\put(60,14){$ \supp(A) $}
\put(96,14){$ \supp(B) $}



\end{picture}


\setlength{\unitlength}{1mm} 
\begin{picture}(140,30)(10,0)

\thicklines

\put(41,17){\line(0,1){2}}
\put(105,17){\line(0,1){2}}

\put(105,18){\line(-1,0){64}}

\linethickness{1mm}
\put(85,17.5){\line(-1,0){5}}

{\color{green}\put(67,17.5){\line(-1,0){5}}}

{\color{blue}\put(103,17.5){\line(-1,0){5}}}

\thinlines

\put(78,21){$ \supp(C) $}
\put(60,14){$ \supp(A) $}
\put(96,14){$ \supp(B) $}



\end{picture}

\caption{Sketching a possible configuration of supports of the operators $ A $, $ B $ and the perturbation $ \lambda \, C $. To illustrate the condition ``at least one among $ [A,C] $ and $ [B,C] $ is vanishing''. } 
\label{supports}
\end{figure}

The assumption about the constancy in time (for almost every $ t $) of the support of $ [E(t), C] $ aligns with most usual models in physics where the time dependence of the Hamiltonian is given by time-dependent functions, for example $ \cos(\nu t) $ or $ \sin (\nu t) $, multiplying local terms.

Assuming for example that $ E(s) $ is a sum of nearest neighbour terms and that the size of the support of $ C $ is two, then $ d \ge \frac{l-3}{2} $, therefore: $ d \gtrsim \frac{l}{2} $. 

\end{Theorem}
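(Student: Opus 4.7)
The plan is to work in the interaction picture of $ E(t) $ and use a Duhamel (telescoping) identity to isolate the perturbation $ \lambda(at)C $, then integrate by parts to extract the $ 1/a $ suppression for bound \eqref{by_part}. Differentiating the operator
\begin{align*}
W(s) := T_G(s, 0)^* \, T_E(t, s)^* A \, T_E(t, s) \, T_G(s, 0)
\end{align*}
in $ s $ (whose values at $ s=0 $ and $ s=t $ are $ T_E^*(t) A T_E(t) $ and $ T_G^*(t) A T_G(t) $ respectively) and using $ G(s) - E(s) = \lambda(as) C $, one obtains the exact identity
\begin{align*}
T_G^*(t) A T_G(t) - T_E^*(t) A T_E(t) = i \int_0^t ds \, \lambda(as) \, T_G^*(s) \, [C, T_E(t,s)^* A T_E(t,s)] \, T_G(s).
\end{align*}
Taking the commutator with $ B $, the piece $ [T_E^*(t) A T_E(t), B] $ is controlled by the L-R hypothesis \eqref{E_L-R}, producing the summand $ \|A\|\|B\| f(t) e^{-l/\xi} $ in both bounds.

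For bound \eqref{direct}, I would estimate the inner commutator $ [C, T_E(t,s)^* A T_E(t,s)] $ by the L-R bound of $ E $ in the case $ [A,C] = 0 $, obtaining the factor $ \|A\|\|C\| f(t,s) e^{-r/\xi} $; the trivial bound $ \|[T_G^*(s) X T_G(s), B]\| \le 2 \|B\| \|X\| $ and integration in $ s $ then yield the $ \bar f(t) $ prefactor. The complementary case $ [B,C] = 0 $ is handled via the cyclic identity $ \|[T_G^* A T_G, B]\| = \|[A, T_G B T_G^*]\| $ together with the analogous Duhamel expansion for $ T_G B T_G^* $, which reproduces the same bound with $ \dist(\supp B, \supp C) $ in the exponent; taking the maximum of the two distances yields $ r $ as in \eqref{r_def}.

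For the high-frequency bound \eqref{by_part}, I would integrate by parts in the above $ s $-integral using $ \lambda(as) = \frac{1}{a}\partial_s \Lambda(as) $ with $ \Lambda(0) = 0 $. The boundary contribution at $ s = t $ is $ \frac{\Lambda(at)}{a}[g(t), B] $ with $ g(t) = T_G^*(t)[C, A] T_G(t) $; bounding $ \|[g(t), B]\| $ via the L-R bound of $ E $ applied to $ [C,A] $ (after one further Duhamel iteration to replace $ T_G $ by $ T_E $) produces the $ \|C\| f(t) $ piece of $ \tilde f_C(t) $ with decay $ e^{-d/\xi} $. For the residual integral I would compute $ \partial_s g(s) $ using the Jacobi identity together with $ G(s) = E(s) + \lambda(as) C $, giving
\begin{align*}
\partial_s g(s) = i\, T_G^*(s) \bigl\{ [[E(s), C], T_E(t,s)^* A T_E(t,s)] + \lambda(as)[C, [C, T_E(t,s)^* A T_E(t,s)]] \bigr\} T_G(s),
\end{align*}
so that the $ [E(s), C] $ term, bounded by L-R of $ E $ applied to $ A $ and $ [E(s), C] $, supplies the $ \|[E(s), C]\| $ piece of $ \tilde f_C(t) $ with decay $ e^{-d/\xi} $, while the double commutator $ [C, [C, \cdot]] $ supplies the $ 2|\lambda| \|C\|^2 $ piece.

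The principal obstacle I anticipate is the support bookkeeping: ensuring that every commutator needing exponential decay is handled by the L-R bound of $ E $ at the correct distance among $ l $, $ r $, and $ d $. The hypothesis that $ \supp[E(s), C] $ is a fixed neighborhood of $ \supp C $, independent of $ s $, enters here to guarantee $ d \le r $ and that L-R of $ E $ applies uniformly in $ s $; the case where only $ [B,C] = 0 $ reduces to the dual Duhamel expansion as indicated, and the $ \max $ in the definitions of $ r $ and $ d $ simply records the better of the two bounds when both commutators vanish.
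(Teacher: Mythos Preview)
Your overall strategy is the same as the paper's: interaction-picture/Duhamel expansion followed by integration by parts, with the two cases $[A,C]=0$ and $[B,C]=0$ handled by the primary and dual expansions respectively. Your Duhamel identity and your formula for $\partial_s h(s)$ (with the $[[E(s),C],\cdot]$ and $\lambda(as)[C,[C,\cdot]]$ pieces) are both correct and coincide with the paper's after using $T_I(s)=T_E^*(s)T_G(s)$.

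There is, however, a genuine muddle in your treatment of the boundary terms in the by-parts step. In your primary Duhamel (which is the one tailored to the case $[A,C]=0$), the boundary at $s=t$ is $T_G^*(t)[C,A]T_G(t)$, which \emph{vanishes} precisely because $[C,A]=0$; the surviving boundary is at $s=0$, namely $[C,\,T_E^*(t)AT_E(t)]$. This term is bounded \emph{directly} by the L-R bound of $E$ as $\|C\|\|A\|f(t)e^{-\dist(A,C)/\xi}$, and after the outer $\|[\,\cdot\,,B]\|\le 2\|B\|\|\cdot\|$ it supplies the $\|C\|f(t)$ piece of $\tilde f_C(t)$. Your assumption $\Lambda(0)=0$ is not a hypothesis of the theorem; it is precisely the factor $|\Lambda(0)|/a\le\sup_{s}|\Lambda(as)|/a$ multiplying this $s=0$ boundary that produces the stated bound. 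Your proposed ``further Duhamel iteration to replace $T_G$ by $T_E$'' for the $s=t$ boundary is both unnecessary (that boundary is zero in the relevant case) and circular (it would reintroduce an integral needing the very bound you are proving). Once you swap the roles of the two boundary terms and drop the $\Lambda(0)=0$ assumption, your argument matches the paper's line for line.
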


\begin{proof}
 We start the proof with a note on the interaction picture. Denoting  $ T_\alpha(t) $ and $ T_\beta(t) $ the time ordered unitary operators with generators given by two generic time dependent Hermitian matrices $ \alpha(t) $ and $ \beta(t) $, it holds:
 \begin{align} \label{time_dep_int_pic}
  T_\beta(t) = T_\alpha(t) \mathcal{T} \left[ e^{-i \int_0^t ds T^*_\alpha(s)(\beta(s)-\alpha(s)) T_\alpha(s)} \right]
 \end{align}
in fact:
\begin{align}
& i \frac{d}{dt} \left( T_\alpha(t) \mathcal{T} \left[ e^{-i \int_0^t ds T^*_\alpha(s)(\beta(s)-\alpha(s)) T_\alpha(s)} \right] \nonumber \right) \\ 
& = \alpha(t) T_\alpha(t) \mathcal{T} \left[ e^{-i \int_0^t ds T^*_\alpha(s)(\beta(s)-\alpha(s)) T_\alpha(s)} \right] +  T_\alpha(t) \left[ T^*_\alpha(t)( \beta(t) - \alpha(t) ) T_\alpha(t) \right] T \left[ e^{-i \int_0^t ds T^*_\alpha(s)(\beta(s) - \alpha(s)) T_\alpha(s)} \right] \\
 & = \alpha(t) T_\alpha(t) \mathcal{T} \left[ e^{-i \int_0^t ds T^*_\alpha(s)(\beta(s)-\alpha(s)) T_\alpha(s)} \right] + (\beta(t) - \alpha(t)) T_\alpha(t) \mathcal{T} \left[ e^{-i \int_0^t ds T^*_\alpha(s)(\beta(s)-\alpha(s)) T_\alpha(s)} \right] \\
 & = \beta(t) T_\alpha(t) \mathcal{T} \left[ e^{-i \int_0^t ds T^*_\alpha(s)(\beta(s)-\alpha(s)) T_\alpha(s)} \right]
\end{align}
The equality \eqref{time_dep_int_pic} then holds since both sides are the unique solution of the differential equation $ i \frac{d}{dt} T_\beta(t) =  \beta(t)  T_\beta(t) $, with initial condition $ T_\beta(0) = \mathds{1} $.

We use \eqref{time_dep_int_pic} setting $ \beta(t) = G(t) $, with $ G(t) $ given in the statement of theorem \ref{Floquet_Perturbation}, and $ \alpha(t) = E(t) $, and then define
\begin{align}
 I(t) := T^*_E(t)(G(t)-E(t))T_E(t) = T^*_E(t) \, \l(at) C \,  T_E(t) 
\end{align}.
The distance $ r $, as defined in \eqref{r_def}, that enters in the RHS of \eqref{direct}, emerges by two possible approaches: in one case we ``bring down'' $ I $ on the first entry of the commutator on the LHS of  \eqref{direct}, in the other case we bring it down on the second entry of the commutator. Let us explain the details of this idea.

We assume $ [ C , B ] = 0 $, this corresponds to considering the case where $ C $ is supported far away from $ B $.

\begin{align} \label{split}
 \b| [T^*_G(t) \, A \,T_G(t),B] \b| = \b| [T^*_I(t) T^*_E(t) \, A \,T_E(t) T_I(t),B] \b| = \b| [ T^*_E(t) \, A \,T_E(t), T_I(t) B T^*_I(t)] \b|
\end{align}
Let us consider a generic initial instant $ t_i = s $ for the time evolution, then using:
\begin{align} \label{54}
& T_I(t):= T_I(t,0)=T_I(t,s) T_I(s,0) \Rightarrow T_I(t,s) = T_I(t,0) T^*_I(s,0) \\
& \Rightarrow i \frac{d}{ds} T_I(t,s) = T_I(t,0) i \frac{d}{ds} T^*_I(s,0) = -T_I(t,0)T^*_I(s,0)I(s)  =  - T_I(t,s) I(s)
\end{align}
we have:
\begin{align} \label{der_second_arg}
 T_I(t) B T^*_I(t) - B & = -\int_0^t ds \, \frac{d}{ds} \left(T_I(t,s) B T^*_I(t,s) \right) \\ & = -i \int_0^t ds \, \left( T_I(t,s) I(s) B T^*_I(t,s) - T_I(t,s) B I(s) T^*_I(t,s) \right) \\
 & = - i \int_0^t ds \, T_I(t,s)[I(s),B]T^*_I(t,s) \\  \label{gen_t}
 & = - i \int_0^t ds \, \l(as) T_I(t,s) [T^*_E(s) C T_E(s), B ] T^*_I(t,s) 
\end{align}

It follows that:
\begin{align} 
 \b| [T^*_G(t) \, A \,T_G(t),B] \b| & \le \b| [ T^*_E(t) \, A \,T_E(t), B] \b| + 2 \|A\|  \int_0^t ds \b|[I(s),B] \b| \\
 & \le \|A\|  \|B\| \, f(t) \, e^{-\frac{l}{\xi}} + 2 \|A\|  \|B\| \|C\| \int_0^t ds |\l(as)|f(s,0) \, e^{-\frac{\dist(\supp \, B,\supp \, C)}{\xi}} \label{direct_B}
\end{align}
Equation \eqref{direct_B} explains part of the bound \eqref{direct}.

Let us perform an integration by part on the RHS of equation \eqref{gen_t}.
\begin{align} 
 & \int_0^t ds \l(as) T_I(t,s) [T^*_E(s) C T_E(s), B ] T^*_I(t,s) \label{first} \\
 & = \frac{\L(as)}{a}  T_I(t,s) [T^*_E(s) C T_E(s), B ] T^*_I(t,s) |_{s=0}^{s=t} -  \int_0^t ds \frac{\L(as)}{a} \frac{d}{ds} \big( T_I(t,s) [T^*_E(s) C T_E(s), B ] T^*_I(t,s) \big) \label{23} \\
 & =  \frac{\L(at)}{a}  [T^*_E(t) C T_E(t), B ]  \\
 & - i \int_0^t ds \frac{\L(as)}{a}  \Big( T_I(t,s) \big[ I(s), [T^*_E(s) C T_E(s), B ]\big]T^*_I(t,s) + T_I(t,s) \big[T^*_E(s) [E(s), C] T_E(s), B \big]T^*_I(t,s)  \Big) 
\end{align}
Taking the norm of \eqref{first} we get:
\begin{align} 
 & \| \int_0^t ds \l(as) T_I(t,s) [T^*_E(s) C T_E(s), B ] T^*_I(t,s)  \| \label{27}  \\
 & \le  \frac{|\L(at)|}{a} \| [T^*_E(t) C T_E(t), B ] \| + \int_0^t ds \frac{|\L(as)|}{a}  \Big( 2 \| I(s) \| \| [T^*_E(s) C T_E(s), B ] \| + \| \bigskip[T^*_E(s) [E(s), C] T_E(s), B \big] \|  \Big)  \\
 & \le \sup_{s \in [0,t]} \frac{|\L(as)|}{a} \|C\| \|B\| f(t) e^{-\frac{\dist(\supp \, B,\supp \, C)}{\xi}} \nonumber \\
 &  + \sup_{s \in [0,t]}  \frac{|\L(as) |}{a} \|B\|  \int_0^t ds f(s) \left( 2 |\l(as)| \|C\|^2  e^{-\frac{\dist(\supp \, B,\supp \, C)}{\xi}} + \|[E(s),C]\| e^{-\frac{\dist(\supp \, B,\supp [E(s),C])}{\xi}}  \right)  
\end{align}

Replacing \eqref{der_second_arg} into \eqref{split} we obtain, under the assumption $ [ C , B ] = 0 $:
\begin{align} \label{up_B}
 & \b| [T^*_G(t) \, A \,T_G(t),B] \b|  \le \|A\|  \|B\|   \, f(t) e^{-\frac{l}{\xi}} + \sup_{s \in [0,t]} \frac{|\L(as)|}{a} 2 \|A\| \|C\| \|B\| f(t) e^{-\frac{\dist(\supp B,\supp C)}{\xi}} \nonumber \\
 &  + \sup_{s \in [0,t]}  \frac{|\L(as)|}{a} 2 \|A\| \|B\|  \int_0^t ds f(s) \left( 2 |\l(as)| \|C\|^2  e^{-\frac{\dist(\supp B,\supp C)}{\xi}} + \|[E(s),C]\| e^{-\frac{\dist(\supp B,\supp [E(s),C])}{\xi}}  \right) \\
& \le \|A\|  \|B\|   \, f(t) e^{-\frac{l}{\xi}} \nonumber \\
& + \sup_{s \in [0,t]} \frac{|\L(as)|}{a} 2 \|A\|  \|B\| \Big( \|C\|  f(t) +  \int_0^t ds f(s) \left( 2 |\l(as)| \|C\|^2   + \|[E(s),C]\| \right)  \Big) e^{-\frac{\dist(\supp B,\supp [E(s),C])}{\xi}}  \label{49}
\end{align}
In equation \eqref{49} we have used that under the assumption $ [E(t),C] \neq 0 $, it is $ \dist(\supp \, B,\supp [E(s),C]) \le \dist(\supp \, B,\supp \, C) $. We have also used the assumption that for almost every $ s $, the $\supp \, [E(s),C] $ is independent from $ s $. This is an assumption to simplify the statement of the theorem, that also fits with the applications that we are going to present in section \ref{applications}.

Let us know assume instead that  $ [ C , A ] = 0 $.
In 
\begin{align} \label{on__A}
 \b| [T^*_G(t) \, A \,T_G(t),B] \b| = \b| [T^*_I(t) T^*_E(t) \, A \,T_E(t) T_I(t),B] \b|
\end{align}
we use:
\begin{align} \label{on_A}
& T^*_I(t) T^*_E(t) \, A \,T_E(t) T_I(t) - T^*_E(t) \, A \,T_E(t)  = \int_0^t ds \frac{d}{ds} \left(T^*_I(s) T^*_E(t) \, A \,T_E(t) T_I(s) \right) \\
& = i \int_0^t ds T^*_I(s) [I(s), T^*_E(t) \, A \,T_E(t)] T_I(s) = i \int_0^t ds \l(as) T^*_I(s) [T^*_E(s)CT_E(s), T^*_E(t) \, A \,T_E(t)] T_I(s)  \label{before_part}
\end{align}
We have that:
\begin{align} 
 \b| [T^*_G(t) \, A \,T_G(t),B] \b| \le & \b| [ T^*_E(t) \, A \,T_E(t), B] \b| + 2 \|B\|  \int_0^t ds |\l(as)| \b| [T^*_E(s)CT_E(s), T^*_E(t) \, A \,T_E(t)] \b|
\end{align}
With $ \| [T^*_E(s)CT_E(s), T^*_E(t) \, A \,T_E(t)] \| = \| [C, T_E(s) T^*_E(t) \, A \,T_E(t)T^*_E(s)] \| = \| [C, T^*_E(t,s) \, A \,T_E(t,s)] \| $, it follows:
\begin{align} \label{direct_A}
 \b| [T^*_G(t) \, A \,T_G(t),B] \b| \le \|A\|  \|B\| \, f(t) \, e^{-\frac{l}{\xi}} + 2 \|A\|  \|B\| \|C\| \int_0^t ds |\l(as)|f(t,s) \, e^{-\frac{\dist(\supp \, A,\supp \, C)}{\xi}}
\end{align}
Equation \eqref{direct_B} together with \eqref{direct_A} prove the upper bound \eqref{direct}.

Continuing from the RHS of \eqref{before_part} and performing a by part integration 
\begin{align}
& T^*_I(t) T^*_E(t) \, A \,T_E(t) T_I(t) - T^*_E(t) \, A \,T_E(t)  = i \frac{\L(as)}{a} T^*_I(s) [T^*_E(s)CT_E(s), T^*_E(t) \, A \,T_E(t)] T_I(s) |_{s=0}^{s=t} \nonumber \\
& - i \int_0^t ds \frac{\L(as)}{a} \frac{d}{ds} \Big( T^*_I(s) [T^*_E(s)CT_E(s), T^*_E(t) \, A \,T_E(t)] T_I(s) \Big) \label{36} \\
& = - i \frac{\L(0)}{a} [C, T^*_E(t) \, A \,T_E(t)] \nonumber \\
& + \int_0^t ds \frac{\L(as)}{a} T^*_I(s) \Big( \big[ I(s), [T^*_E(s)CT_E(s), T^*_E(t) \, A \,T_E(t)] \big] + \big[T^*_E(s)[E(s),C]T_E(s), T^*_E(t) \, A \,T_E(t) \big]  \Big) T_I(s) \label{53}
\end{align}

The norm of \eqref{53} is upper bounded as:
\begin{align} \label{54}
&  \frac{|\L(0)|}{a} \| [C, T^*_E(t) \, A \,T_E(t)] \| \nonumber \\
& + \sup_{s\in [0,t]} \frac{|\L(as)|}{a}  \int_0^t ds \Big( 2 |\l(as)| \|C\| \| [T^*_E(s)CT_E(s), T^*_E(t) \, A \,T_E(t)] \| + \| \big[T^*_E(s)[E(s),C]T_E(s), T^*_E(t) \, A \,T_E(t) \big] \| \Big)  
\end{align}

Overall the norm of \eqref{on__A} is upper bounded, with the assumption $ [ C , A ] = 0 $, as:
\begin{align} \label{up_A}
 & \b| [T^*_G(t) \, A \,T_G(t),B] \b| \le  \|A\| \|B\| f(t) e^{-\frac{l}{\xi}} + \frac{|\L(0)|}{a} 2 \|B\| \|A\| \|C\| f(t) e^{-\frac{\dist(\supp A,\supp C)}{\xi}} \nonumber \\
 & + \sup_{s\in [0,t]} \frac{|\L(as)|}{a} 2 \|B\| \|A\| \int_0^t ds f(t,s)  \Big( 2 |\l(as)|  \|C\|^2 e^{-\frac{\dist(\supp A,\supp C)}{\xi}}  +  \|[E(s),C]\| e^{-\frac{\dist(\supp A,\supp [E(s),C])}{\xi}} \Big) \\
 & \le \|A\|  \|B\|   \, f(t) e^{-\frac{l}{\xi}} \nonumber \\
& + \sup_{s \in [0,t]} \frac{|\L(as)|}{a} 2 \|A\|  \|B\| \Big( \|C\|  f(t) +  \int_0^t ds f(t,s) \left( 2 |\l(as)| \|C\|^2   + \|[E(s),C]\| \right)  \Big) e^{-\frac{\dist(\supp A,\supp [E(s),C])}{\xi}} 
\end{align}

We define:
\begin{align}
 \tilde{f}_C(t) := \max\{\|C\|  f(t) +  \int_0^t ds f(s) \left( 2 |\l(as)| \|C\|^2   + \|[E(s),C]\| \right),\|C\|  f(t) +  \int_0^t ds f(t,s) \left( 2 |\l(as)| \|C\|^2   + \|[E(s),C]\| \right)\}
\end{align}

At this point we have two upper bounds, equations \eqref{up_B} and \eqref{up_A}, for the same quantity $ \b| [T^*_G(t) \, A \,T_G(t),B] \b| $, we take as the final upper bound the smaller among the two. This is essentially determined by the relative position of $ C $ with respect to $ A $ and $ B $. Then the final bound is:
\begin{align} \label{final}
  \b| [T^*_G(t) \, A \,T_G(t),B] \b| \le   \|A\|  \|B\|   \, f(t) e^{-\frac{l}{\xi}}
 + \sup_{s \in [0,t]} \frac{|\L(as)|}{a} 2 \|A\|  \|B\| \tilde{f}_C(t) e^{-\frac{d}{\xi}} 
\end{align}
with 
\begin{align}
 d:= \max \{\dist(\supp A,\supp [E(s),C]),\dist(\supp B,\supp [E(s),C])\}
\end{align}

\end{proof}

\subsection{Further integrations by part don't improve the bound} \label{higher_orders}

Performing further integrations by part, after the ones we have made in \eqref{23} and \eqref{36}, do not improve the leading order value of the L-R in \eqref{by_part}. The reason is that, assuming on one hand that a primitive function of $ \Lambda(t) $ exists, let us call it $ {\tilde \Lambda} (t) $, that would bring a factor $ \frac{1}{a^2} $, on the other hand a derivative of $ I(s):=  T^*_E(t) \, \l(at) C \,  T_E(t) $ gives a factor $ a $, therefore the leading order correction would be $ \approx \frac{1}{a} $. The same holds true for further integrations by part. Let us show this explicitly. Returning to equation \eqref{on__A}
 
\begin{align}
&[T^*_G(t) \, A \,T_G(t),B] = [T^*_I(t) T^*_E(t) \, A \,T_E(t) T_I(t),B] \\
& = [T^*_E(t) \, A \,T_E(t),B] + \Big[ \int_0^t ds \frac{d}{ds} \Big( T^*_I(s) T^*_E(t) \, A \,T_E(t) T_I(s) \Big),B \Big] \\
& = [T^*_E(t) \, A \,T_E(t),B] + \Big[ i \int_0^t ds T^*_I(s) [I(s), T^*_E(t) \, A \,T_E(t)] T_I(s) ,B \Big] \\
& = [T^*_E(t) \, A \,T_E(t),B] + \Big[ i \int_0^t ds \l(as) T^*_I(s) [T^*_E(s)CT_E(s), T^*_E(t) \, A \,T_E(t)] T_I(s) ,B \Big] \label{before_int_part}
\end{align}
Continuing from the integral in \eqref{before_int_part}:
\begin{align}
& i \frac{\L(as)}{a} T^*_I(s) [T^*_E(s)CT_E(s), T^*_E(t) \, A \,T_E(t)] T_I(s) |_{s=0}^{s=t}  - i \int_0^t ds \frac{\L(as)}{a} \frac{d}{ds} \Big( T^*_I(s) [T^*_E(s)CT_E(s), T^*_E(t) \, A \,T_E(t)] T_I(s) \Big) \\
& = - i \frac{\L(0)}{a} [C, T^*_E(t) \, A \,T_E(t)]  \nonumber \\
& + \int_0^t ds \frac{\L(as)}{a} T^*_I(s) \Big( \big[ I(s), [T^*_E(s)CT_E(s), T^*_E(t) \, A \,T_E(t)] \big] + \big[T^*_E(s)[E(s),C]T_E(s), T^*_E(t) \, A \,T_E(t) \big]  \Big) T_I(s) \label{88}
\end{align}
Performing a further integration by part of the first integral in \eqref{88} we get:
\begin{align}
& \int_0^t ds \frac{\L(as)}{a} T^*_I(s)  \big[ I(s), [T^*_E(s)CT_E(s), T^*_E(t) \, A \,T_E(t)] \big] T_I(s)  \\
& = \frac{{\tilde \L}(as)}{a^2} T^*_I(s)  \big[ I(s), [T^*_E(s)CT_E(s), T^*_E(t) \, A \,T_E(t)] \big] T_I(s) |_{s=0}^{s=t} \nonumber \\ 
& - \int_0^t ds \frac{{\tilde \L}(as)}{a^2} \frac{d}{ds} \Big( T^*_I(s)  \big[ I(s), [T^*_E(s)CT_E(s), T^*_E(t) \, A \,T_E(t)] \big] T_I(s) \Big)
\end{align}
As anticipated the derivative of $ I(s):=  T^*_E(t) \, \l(at) C \,  T_E(t) $ gives rise to a factor $ a $, therefore the order of this term is $ \frac{1}{a} $, that is the leading one in $ a $. This is the same order that was already emerging after a single integration by part as in \eqref{by_part}.


\section{Application I: the Anderson model with sparse interactions} \label{applications}

We study the one-dimensional $ XY $ model with random magnetic field perturbed by a sparse set of $ ZZ $ terms with coupling constant $ \Delta $. For simplicity let us consider for now the case where there is a single $ ZZ $ term located on the sites $ 0 $ and $ 1 $, later on we will generalize our result to the case of a sparse set of $ ZZ $ terms. 
\begin{align} \label{Ham}
 H_\omega = \sum_{j=-L}^{L-1} J \left( \sigma_j^x\sigma_{j+1}^x + \sigma_j^y\sigma_{j+1}^y \right) + \sum_{j=-L}^{L} \omega_j \sigma_j^z  + \Delta \sigma_0^z\sigma_1^z  
\end{align}
The Hilbert space is $ \bigotimes_{j=-L}^L \mathds{C}^2 $, and $ \sigma^x $, $ \sigma^y $, $ \sigma^z $ denote the Pauli matrices.  Also $ \sigma_j^x\sigma_{j+1}^x $ is a short hand for $ \mathds{1}_2 \otimes \dotsm \otimes \sigma_j^x \otimes \sigma_{j+1}^x \otimes \dotsm \otimes \mathds{1}_2 $, and similarly for the other terms in \eqref{Ham}.
$ \omega_j $ is taken at random, uniformly, from the interval $ [-\Omega,\Omega] $, and $ \omega = \{\omega_j\}_{j\in \{-L,...,L\}} $.

The model given in \eqref{Ham} with $ \Delta = 0 $ has been shown to be equivalent to the Anderson model and its dynamics is such that the supports of operators do not spread, see section 4 of \cite{Sims_Stolz_2012}, leading, with $ l:= \dist(\supp A,\supp B) $, to
\begin{align} \label{Anderson}
\underset{\omega}{\mathbb{E}} \sup_{t} \b| [e^{itH_{\omega,\Delta=0}} \, A \,e^{-itH_{\omega,\Delta=0}},B] \b| \le    K \|A\|  \|B\|   \, e^{-\frac{l}{\xi} }
\end{align}
In \eqref{Anderson} $ A $ and $ B $ are two operators with fixed and bounded supports that are assumed, for simplicity, to be connected, and do not change upon rescaling the system's size.

From \eqref{Anderson} we also see that: 
\begin{align} \label{without_sup} 
 \b| [e^{itH_{\omega,\Delta=0}} \, A \,e^{-itH_{\omega,\Delta=0}},B] \b| \le \sup_{t} \b| [e^{itH_{\omega,\Delta=0}} \, A \,e^{-itH_{\omega,\Delta=0}},B] \b|
\end{align}
then taking $ \underset{\omega}{\mathbb{E}} $ both sides of \eqref{without_sup}:
\begin{align}
\underset{\omega}{\mathbb{E}}  \b| [e^{itH_{\omega,\Delta=0}} \, A \,e^{-itH_{\omega,\Delta=0}},B] \b|  \le \underset{\omega}{\mathbb{E}} \sup_{t} \b| [e^{itH_{\omega,\Delta=0}} \, A \,e^{-itH_{\omega,\Delta=0}},B] \b| \le    K \|A\|  \|B\|   \, e^{-\frac{l}{\xi} }
\end{align}


\textbf{Lemma} (Logarithmic lightcone in the Anderson model with a single $ ZZ $ perturbation, informal).
{ \it Given the Hamiltonian $ H_\omega $ in \eqref{Ham} and $ A $ and $ B $ with fixed and bounded supports at a distance equal to $ l $, it holds:
\begin{align} \label{L-R}
\underset{\omega}{\mathbb{E}}   \b| [e^{itH_{\omega}} \, A \,e^{-itH_{\omega}},B] \b| \le    K \|A\|  \|B\|   \, e^{-\frac{l}{\xi} } + K' \, \|A\|  \|B\|  \, t \, e^{-\frac{d}{\xi} }
\end{align}
with $ d:= \max \{\dist(\supp A,[-2,3]),\dist(\supp B,[-2,3])\} $, and 
\begin{align} \label{delta_values}
 K' \propto
\begin{cases}
\Delta & \textrm{with} \hspace{2mm} 0 \le \Delta \le J \\
\frac{J(J+\Omega)}{\Delta} & \textrm{with} \hspace{2mm} \Delta \ge J + \Omega
\end{cases}
\end{align}
The constants $ K $ and $ \xi $ are the same as in equation \eqref{Anderson}. The minimum possible value of $ d $ is $ d \approx \frac{l}{2} $.
}

The interesting result given in \eqref{L-R},\eqref{delta_values} is that very small and very large values of $ \Delta $ lead to the same the dynamical spread of the support of operators resulting in a suppressed logarithmic lightcone, while the intermediate values $ \Delta \approx J $ are those giving the maximal delocalization.

The name ``logarithmic lightcone'' comes from the equality $ K' \, t \, e^{ -\frac{d}{\xi} } =   e^{\log (K' t) - \frac{d}{\xi} } $. This means that up to a time $ t $ exponentially large in $ d $,  $ e^{iH_\om t} A e^{-iH_\om t} $ and $ B $ commute up to an exponentially small correction. In fact defining $ t_{max} $ such that $ K' t_{max} = e^{\frac{d}{2\xi}} $, at $ t=t_{max} $ the upper bound in \eqref{L-R} is proportional to $ e^{-\frac{d}{2\xi}} \ll 1 $, when $ d \gg \xi $. $ e^{iH_\om t} A e^{-iH_\om t} $ is a propagating operator and if we imagine $ B $ as a ``detector'' of such propagation, according to \eqref{L-R} within the time $ t_{max} $ there is virtually no detection. In a logarithmic lightcone $ t_{max} $ scales as $ \frac{1}{K'} e^{\frac{d}{2\xi}} $.

We have already studied in \cite{Toniolo_2024_1} how a local perturbation of the Hamiltonian affects the Lieb-Robinson bounds of the dynamics. In particular the case $ \Delta \le J $ in \eqref{delta_values} follows directly from theorem 1 of \cite{Toniolo_2024_1}. We remark that in theorem 1 and lemma 2 of our work \cite{Toniolo_2024_1} we considered the perturbation of a completely generic local Hamiltonian, by a general set of local perturbations $ h_j $, the only restrictions on such perturbations being the rate of the exponential increase of their norm and the location of their supports with respect to the supports of $ A $ and $ B $. Here instead we solve in lemma \ref{single_ZZ} a given model \eqref{Ham}, so we take direct advantage of its explicit structure to study the regime $ \Delta \ge J + \Omega $. The physical interpretation of the L-R bound \eqref{L-R}, with $ \Delta \gg J + \Omega $, is immediate when the Hamiltonian \eqref{Ham} is written in the particle language after a Jordan-Wigner transformation \cite{Bosonization,Kitaev_2009}

\begin{align} \label{Ham_particle}
 H_\omega = \sum_{j=-L}^{L-1} J \left( \psi_{j+1}^\dagger \psi_{j} + \textrm{h.c.} \right) + \sum_{j=-L}^{L} \omega_j \psi_j^\dagger \psi_{j} + \Delta  \psi_0^\dagger  \psi_0  \psi_1^\dagger \psi_1
\end{align}
The Hamiltonian \eqref{Ham_particle} describes particles hopping on a one dimensional lattice, these interact among each other only on the sites $ 0 $ and $ 1 $ when both of them are occupied, otherwise outside of $ \{0,1\} $ the random potential $ \omega_j $, even for an infinitesimally small $ \Omega $, localize them. With $ \Delta \gg J + \Omega $ the configuration where both the lattice sites $ 0 $ and $ 1 $ are occupied becomes energetically unfavorable. Eventually, after a long time, the particles have a chance to hop from the left to the right (or vice versa) of the energy barrier in $ [0,1] $, as pairs, unless $ \Delta \rightarrow \infty $ in such a case the space of configurations is split into two disconnected regions. Indeed in such a limit equation \eqref{L-R} shows that Anderson dynamics is restored.

It is important to remark that the largest absolute value, $ \Omega $, of the random field, for a uniform distribution, also determines the localization length $ \xi $, meaning that for a fixed $ t $, the $ \lim_{\Omega \rightarrow \infty} \Omega \, e^{-\frac{d}{\xi}} = 0 $.

In the following lemma we give the exact statement and the proof of equations \eqref{L-R},\eqref{delta_values}.


\begin{Lemma} \label{single_ZZ}
{ \it Given the Hamiltonian 
\begin{align} 
 H_\omega = \sum_{j=-L}^{L-1} J \left( \sigma_j^x\sigma_{j+1}^x + \sigma_j^y\sigma_{j+1}^y \right) + \sum_{j=-L}^{L} \omega_j \sigma_j^z  + \Delta \sigma_0^z\sigma_1^z  
\end{align}
as in \eqref{Ham}, $ A $ and $ B $ operators with a fixed bounded support, such that al least one of them has support not overlapping with $ [0,1] $, with $ d:= \max \{\dist(\supp A,[-2,3]),\dist(\supp B,[-2,3])\} $, $ K $ and $ \xi $ as in equation \eqref{Anderson}, the following L-R bounds hold:
\begin{align} \label{first_l}
\underset{\omega}{\mathbb{E}}   \b| [e^{itH_{\omega}} \, A \,e^{-itH_{\omega}},B] \b| \le    K \|A\|  \|B\|   \, e^{-\frac{l}{\xi} } + 2K \|A\|  \|B\|   \, \Delta \, t \, e^{-\frac{d}{\xi} }
\end{align}
\begin{align} 
\underset{\omega}{\mathbb{E}}   \b| [e^{itH_{\omega}} \, A \,e^{-itH_{\omega}},B] \b| \le    
\begin{cases}
& 2K \|A\|  \|B\|   \, e^{-\frac{l}{\xi} } +  16 K \|A\|  \|B\|  J \frac{\sin(2 \Delta t)}{2 \Delta} e^{-\frac{d}{\xi} }, \, \textrm{with} \hspace{1mm} \ t \le \frac{\pi}{4 \Delta} \\
& 2K \|A\|  \|B\|   \, e^{-\frac{l}{\xi} } + 8 K \|A\|  \|B\| \frac{J}{\Delta} e^{-\frac{d}{\xi} }  \\
& + \frac{32}{\pi}  K \|A\|  \|B\|  J \, t \, e^{-\frac{d}{\xi} }, \, \textrm{when} \hspace{1mm} \ \exists n > 1, n \in \mathds{N}:  (n-1) \frac{\pi}{4 \Delta} < t \le n \frac{\pi}{4 \Delta} 
\end{cases}
\end{align} 
Also
\begin{align} \label{final_large_Delta}
\underset{\omega}{\mathbb{E}}   \b| [e^{itH_{\omega}} \, A \,e^{-itH_{\omega}},B] \b| \le & 2K \|A\|  \|B\|   \, e^{-\frac{l}{\xi} } + 16K\left( \frac{J}{\Delta} + 4 \frac{J^2}{\Delta^2} \right) \|A\|  \|B\|   \, e^{-\frac{d}{\xi} } \nonumber \\ & + 64K \, \|A\|  \|B\|  \, \frac{J}{\Delta} \, \left( \frac{4}{\pi} J + \Omega \right) \, t \, e^{-\frac{d}{\xi} }
\end{align}
}
\end{Lemma}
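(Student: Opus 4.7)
The proof splits into two regimes that both reduce to applications of Theorem~\ref{Floquet_Perturbation}. For \eqref{first_l} the plan is to identify $E(t)\equiv H_{\omega,\Delta=0}$ (Anderson, time-independent) and take the perturbation to be the constant $C=\sigma_0^z\sigma_1^z$ with coupling $\lambda\equiv\Delta$, i.e.\ $a=0$. The lemma's support hypothesis (one of $A,B$ is disjoint from $[0,1]$) is exactly the condition $[A,C]=0$ or $[B,C]=0$ required by the theorem. The Anderson bound \eqref{Anderson} gives, after $\underset{\omega}{\mathbb{E}}$, a time-uniform $f(t,s)\le K$. Plugging into \eqref{direct} yields $\bar f(t)=K\Delta t$ and, since $\|C\|=1$ and $\supp C\subset[0,1]$, the distance $d$ is as stated in the lemma. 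Taking $\underset{\omega}{\mathbb{E}}$ on both sides produces \eqref{first_l}.

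For the large-$\Delta$ bound \eqref{final_large_Delta} the key idea is to pass into the interaction picture with respect to the large term itself. Define $U_0(t):=e^{-i\Delta t\,\sigma_0^z\sigma_1^z}$ and write $e^{-iH_\omega t}=U_0(t)\tilde U(t)$ where $\tilde U$ is generated by
\begin{equation}
\tilde H(t)=U_0^*(t)\bigl(H_\omega-\Delta\sigma_0^z\sigma_1^z\bigr)U_0(t)=H_{\omega,\Delta=0}+V(t).
\end{equation}
Because one of $A,B$ has support disjoint from $\{0,1\}$, it commutes with $U_0(t)$, so $\b|[e^{iH_\omega t}Ae^{-iH_\omega t},B]\b|=\b|[\tilde U^*(t)A\tilde U(t),B]\b|$. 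Using $(\sigma_0^z\sigma_1^z)^2=\mathbbm 1$ together with the anticommutation of $\sigma_0^z\sigma_1^z$ with the Pauli operators on site $0$ or site $1$ alone, I would compute $V(t)$ explicitly. Only the two bonds $j=-1$ and $j=1$ fail to commute with $\sigma_0^z\sigma_1^z$, and each of them is mapped to $\cos(2\Delta t)\,(\text{original bond})+\sin(2\Delta t)\,(\text{tilted bond})$, so that $V(t)=(\cos(2\Delta t)-1)V_c+\sin(2\Delta t)\,V_s$ with Hermitian $V_c,V_s$ of norm $O(J)$ and supports contained in $[-1,2]$.

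Now I would apply Theorem~\ref{Floquet_Perturbation} with $E(t)=H_{\omega,\Delta=0}$ and perturbation $V(t)$, splitting $V(t)$ into its cosine part ($\lambda(at)=\cos(2\Delta t)$, $\Lambda(as)/a=\sin(2\Delta s)/(2\Delta)$), its sine part ($\lambda(at)=\sin(2\Delta t)$, $\Lambda(as)/a=(1-\cos(2\Delta s))/(2\Delta)$), and the time-constant remainder. Each trigonometric piece feeds into bound \eqref{by_part}: the prefactor $\sup_s|\Lambda(as)|/a\le 1/(2\Delta)$ combined with $\|V_{c/s}\|=O(J)$ produces the $J/\Delta$ boundary term in \eqref{final_large_Delta}; the $2|\lambda|\|C\|^2$ contribution in $\tilde f_C$ produces the $J^2/\Delta^2$ correction; and $\|[H_{\omega,\Delta=0},V_{c/s}]\|$ splits into a hopping-hopping commutator of norm $O(J^2)$ and a disorder-hopping commutator of norm $O(J\Omega)$, yielding the final linear-in-$t$ factor $\tfrac{J}{\Delta}(\tfrac{4}{\pi}J+\Omega)$ once the $\underset{\omega}{\mathbb{E}}\sup_t$ Anderson bound is used for $f$ and the trigonometric time integrals are evaluated with $\int_0^t|\cos(2\Delta s)|\,ds\le\tfrac{2}{\pi}t+O(1/\Delta)$. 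The time-constant piece of $V(t)$, being supported in $[-1,2]$, is handled by the direct bound \eqref{direct} and simply contributes to the constant in front of $e^{-l/\xi}$ (the doubling from $K$ to $2K$).

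The middle-regime bound follows from the same setup but uses the sharper envelope $|\Lambda(2\Delta s)/2\Delta|\le|\sin(2\Delta t)|/(2\Delta)$ during the first quarter-period $t\le\pi/(4\Delta)$, before the integral of $|\sin(2\Delta s)|$ starts to accumulate linearly, giving the piecewise statement. The main obstacle I expect is the bookkeeping: matching the explicit numerical constants $2K$, $16K$, $32/\pi$, $64K$ requires carefully adding the bounds arising from the cosine and sine pieces (and from the two sides, $[A,C]=0$ versus $[B,C]=0$, of Theorem~\ref{Floquet_Perturbation}) and keeping track of the constant-$V_0$ contribution; however the structural form of \eqref{final_large_Delta}, namely a $K'\propto J(J+\Omega)/\Delta$ linear-in-$t$ term and $O(J/\Delta)$ boundary terms, is dictated directly by the factor $\Lambda/a\sim 1/\Delta$ produced by the single integration by parts in \eqref{by_part}.
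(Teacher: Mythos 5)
Your handling of the first bound \eqref{first_l} coincides with the paper's (choosing $\lambda=\Delta$, $\|C\|=1$ rather than $\lambda=1$, $\|C\|=\Delta$ is immaterial), and your identification of the interaction-picture Hamiltonian is correct: only the bonds at $j=\pm1$ are rotated, each into $\cos(2\Delta t)\,(\text{bond})+\sin(2\Delta t)\,(\text{tilted bond})$.

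There is, however, a genuine gap in your large-$\Delta$ argument. You keep the full Anderson Hamiltonian $H_{\omega,\Delta=0}$ as the reference dynamics and write the perturbation as $V(t)=(\cos(2\Delta t)-1)V_c+\sin(2\Delta t)V_s$. The piece $(\cos(2\Delta t)-1)V_c$ is fatal to the claimed $1/\Delta$ suppression: its coefficient has nonzero time average, so its primitive behaves like $-s$ and $\sup_s|\Lambda(as)|/a$ is $O(t)$ rather than $O(1/\Delta)$, hence the integration by parts in \eqref{by_part} buys nothing for this term. Your fallback --- splitting off the constant $-V_c$ and ``handling it by the direct bound \eqref{direct}'' --- does not work either: \eqref{direct} applied to a time-independent perturbation of norm $\sim 4J$ supported near $[0,1]$ produces a contribution $\propto K J\, t\, e^{-d/\xi}$ with no factor of $1/\Delta$, which dominates the right-hand side of \eqref{final_large_Delta} precisely in the regime $\Delta\gg J+\Omega$ of interest; it certainly cannot be absorbed into the prefactor of $e^{-l/\xi}$ as you assert. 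The paper's proof avoids this by moving $-V_c$ into the reference Hamiltonian: the reference is $E_\omega$ of \eqref{def_E}, i.e.\ the Anderson chain with the bonds at $j=\pm1$ deleted, so that the perturbation is the purely oscillatory $\cos(2\Delta t)C_1+\sin(2\Delta t)C_2$ with zero-mean coefficients. This forces an extra step that your proposal is missing: one must show that $E_\omega$, which is no longer the Hamiltonian appearing in \eqref{Anderson}, still obeys a time-uniform disorder-averaged L-R bound. The paper does this by observing that $E_\omega=E_1+E_2+E_3$ splits into three mutually commuting Anderson-type pieces and running a partial-trace argument for operators whose supports straddle the cuts; this is the actual origin of the factor $2K$ in front of $e^{-l/\xi}$ (not the constant piece of $V$). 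A minor further point: the intermediate piecewise bound is obtained in the paper from the \emph{direct} estimate \eqref{direct} applied to the oscillatory perturbation, via $\int_0^t|\cos(2\Delta s)|\,ds$, not from a sharper envelope inside the integration by parts.
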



\begin{proof}
The proof follows as an application of lemma \ref{Floquet_Perturbation}. We evaluate first the two bounds given in \eqref{direct} and \eqref{by_part} and then establish which is the better bound in the different regimes of $ \frac{\Delta}{J} $.

The bound given in \eqref{direct} is applied to the Hamiltonian \eqref{Ham}, with $ \l(t) = 1 $, $ \|C\|=\Delta $ and $ f(t)=K $, giving:
\begin{align} 
\underset{\omega}{\mathbb{E}}   \b| [e^{itH_{\omega}} \, A \,e^{-itH_{\omega}},B] \b| \le  K \|A\| \|B\|   \, e^{-\frac{l}{\xi} } + 2 K \Delta  \|A\|  \|B\|  \, t \, e^{-\frac{d}{\xi} }
\end{align}
with $ d := \max \{ \dist(\supp A, [0,1]), \dist(\supp B, [0.1]) \} $. The bound in \eqref{first_l} is a further upper bound with the value of $ d $ given in the statement of the lemma.

To apply the bound \eqref{I_L-R} of lemma \ref{Floquet_Perturbation} we will obtain $ E(t) $ after the application of the interaction picture to the dynamics of \eqref{Ham}, where the Hamiltonian \eqref{Ham}, with $ \Delta=0 $, is unitarily transformed with respect to $ e^{-it \Delta \sigma_0^z\sigma_1^z} $. The resulting dynamics will be given by an effective time-periodic Hamiltonian with frequency proportional to $ \Delta $.
We use the interaction picture to write:
\begin{align} 
& e^{-itH_\om }  = e^{-it \Delta \sigma_0^z\sigma_1^z} \, \mathcal{T} \left[ e^{ -i \int_0^t ds e^{is \Delta \sigma_0^z\sigma_1^z} \, \left( \sum_{j=-L}^{L-1} J \left( \sigma_j^x\sigma_{j+1}^x + \sigma_j^y\sigma_{j+1}^y \right) + \sum_{j=-L}^{L}  \omega_j \sigma_j^z \right) \, e^{-is \Delta \sigma_0^z\sigma_1^z}  } \right] \label{int_picture_0}  \\
& = e^{-it \Delta \sigma_0^z\sigma_1^z} \mathcal{T}  \left[ e^{ -i \int_0^t ds \left( e^{is \Delta \sigma_0^z\sigma_1^z}  \sum_{j=-1}^{1}  J \left( \sigma_j^x\sigma_{j+1}^x + \sigma_j^y\sigma_{j+1}^y \right)  e^{-is \Delta \sigma_0^z\sigma_1^z} +  \sum_{j=-L, j \neq \{-1,0,1\}}^{L-1}    J \left( \sigma_j^x\sigma_{j+1}^x + \sigma_j^y\sigma_{j+1}^y \right) + \sum_{j=-L}^{L} \omega_j \sigma_j^z \right)  } \right] \label{int_picture}
\end{align}
$ \mathcal{T}  $ denotes time ordering, according to the rule: latest time goes on the left. We briefly mention that in our work \cite{Toniolo_2024_1} to consider perturbations of the dynamics of the many-body Anderson model, at the level of the Lieb-Robinson bounds, we used as the reference dynamics, we call reference dynamics the one that is factorized in the interaction picture, the one generated by the Anderson model itself, see equation (16) of \cite{Toniolo_2024_1}, here instead we are using as the reference dynamics the one given by the $ ZZ $ term.

Our fist goal is to evaluate the first term at the exponent inside the time-ordered operator. 
We immediately note the identity, with $ a \in \mathds{R} $:
\begin{align} \label{exp_sigma}
 e^{ia \sigma_0^z\sigma_1^z} := \sum_{n=0}^\infty \frac{1}{n!}(ia\sigma_0^z\sigma_1^z)^n = 
 \mathds{1}_4 \sum_{l=0}^\infty \frac{1}{(2l)!}(ia)^{2l} + \sigma_0^z\sigma_1^z \sum_{l=0}^\infty \frac{1}{(2l+1)!}(ia)^{2l+1} =  
 \mathds{1}_4 \cos a + i \sigma_0^z\sigma_1^z \sin a
\end{align}
Also, it can be seen that:
\begin{align} \label{inv_trans}
 e^{is \Delta \sigma_0^z\sigma_1^z} \sigma_0^x\sigma_1^x  e^{-is \Delta \sigma_0^z\sigma_1^z} = \sigma_0^x\sigma_1^x
\end{align}
The same holds true replacing $ x $ with $ y $ in \eqref{inv_trans}. This follows from:
\begin{align} \label{comm}
 [\sigma_0^\alpha \sigma_1^\gamma, \sigma_0^\beta \sigma_1^\delta] = 0
\end{align}
for any $ \alpha $, $ \beta $, $ \gamma $ and $ \delta \in \{x,y,z\}$. \eqref{comm} is an immediate consequence of the fact that two distinct Pauli matrices anticommute, that in turn follows from: $ \sigma^\alpha \sigma^\beta = i \sum_{\gamma \in \{x,y,z\}} \epsilon_{\alpha,\beta,\gamma} \sigma^\gamma $. That also implies:
$ [\sigma^\alpha, \sigma^\beta] = 2 i \sum_{\gamma \in \{x,y,z\}} \epsilon_{\alpha,\beta,\gamma} \sigma^\gamma $.
For the sake of completeness we prove \eqref{comm} in appendix \ref{comm_tensor_Pauli}.

Using \eqref{exp_sigma}, it holds:
\begin{align}
 e^{is \Delta \sigma_0^z\sigma_1^z} \sigma_{-1}^x\sigma_{0}^x  e^{-is \Delta \sigma_0^z\sigma_1^z} & = \sigma_{-1}^x e^{is \Delta \sigma_0^z\sigma_1^z} \sigma_{0}^x  e^{-is \Delta \sigma_0^z\sigma_1^z} \\
 & =  \sigma_{-1}^x [\mathds{1}_4 \cos (s\Delta) + i \sigma_0^z\sigma_1^z \sin (s\Delta)] \sigma_{0}^x   [\mathds{1}_4 \cos (s\Delta) - i \sigma_0^z\sigma_1^z \sin (s\Delta)] \\
 & = \sigma_{-1}^x \left( \cos^2 (s\Delta)  \sigma_{0}^x + \cos (s\Delta) \sin (s\Delta) [i \sigma_0^z\sigma_1^z,\sigma_{0}^x] + \sin^2 (s\Delta) \sigma_0^z\sigma_1^z \sigma_{0}^x  \sigma_0^z\sigma_1^z \right) \\
 & = \cos (2s\Delta) \sigma_{-1}^x\sigma_{0}^x - \sin (2s\Delta) \sigma_{-1}^x\sigma_{0}^y \sigma_1^z \label{15}
\end{align}
In the last step we have used $ [i \sigma_0^z\sigma_1^z,\sigma_{0}^x] = - 2 \sigma_0^y\sigma_1^z $.
Similarly we obtain:
\begin{align}
  e^{is \Delta \sigma_0^z\sigma_1^z} \sigma_{1}^x\sigma_{2}^x  e^{-is \Delta \sigma_0^z\sigma_1^z} = \cos (2s\Delta) \sigma_{1}^x\sigma_{2}^x - \sin (2s\Delta) \sigma_0^z \sigma_{1}^y\sigma_{2}^x  
\end{align}
And for the $ YY $ terms we get:
\begin{align}
 & e^{is \Delta \sigma_0^z\sigma_1^z} \sigma_{-1}^y\sigma_{0}^y  e^{-is \Delta \sigma_0^z\sigma_1^z} = 
 \cos (2s\Delta) \sigma_{-1}^y\sigma_{0}^y + \sin (2s\Delta) \sigma_{-1}^y\sigma_{0}^x \sigma_1^z \label{17} \\
 & e^{is \Delta \sigma_0^z\sigma_1^z} \sigma_{1}^y\sigma_{2}^y  e^{-is \Delta \sigma_0^z\sigma_1^z} = \cos (2s\Delta) \sigma_{1}^y\sigma_{2}^y + \sin (2s\Delta) \sigma_0^z \sigma_{1}^x\sigma_{2}^y  
\end{align}
The RHS of \eqref{15} and \eqref{17} have different sign of the $ \sin $ term. This can be traced back to $ [i \sigma_0^z\sigma_1^z,\sigma_{0}^y] = 2 \sigma_0^x\sigma_1^z $.

Overall we obtain that:
\begin{align}\label{u_trans}
 & e^{is \Delta \sigma_0^z\sigma_1^z} \, \sum_{j=-1}^{1}  \left( \sigma_j^x\sigma_{j+1}^x + \sigma_j^y\sigma_{j+1}^y \right) e^{-is \Delta \sigma_0^z\sigma_1^z} \nonumber \\
 & = \cos (2s\Delta) \left( \sigma_{-1}^x\sigma_{0}^x + \sigma_{1}^x\sigma_{2}^x + \sigma_{-1}^y\sigma_{0}^y + \sigma_{1}^y\sigma_{2}^y \right) + \sigma_0^x\sigma_1^x + \sigma_0^y\sigma_1^y + \sin (2s\Delta) \left( \sigma_{-1}^y\sigma_{0}^x\sigma_{1}^z+ \sigma_{0}^z\sigma_{1}^x\sigma_{2}^y -  \sigma_{-1}^x\sigma_{0}^y\sigma_{1}^z - \sigma_{0}^z\sigma_{1}^y\sigma_{2}^x \right)
\end{align}

At this point we note that the interaction parameter $ \Delta $ only appears at the argument of $ \cos(2s\Delta) $ and $ \sin(2s\Delta) $ that are both zero-average periodic functions, then the idea is that the larger $ \Delta $ the ``smaller'' the influence on the dynamics of the interaction will be. To quantify what ``smaller''means we make use of theorem \ref{Floquet_Perturbation}. Let us first rewrite the full evolution operator. With
\begin{align} \label{def_E}
 E_\om := J \left( \sigma_0^x\sigma_1^x + \sigma_0^y\sigma_1^y \right) + \sum_{j=-L, j \neq \{-1,0,1\}}^{L-1} J \left( \sigma_j^x\sigma_{j+1}^x + \sigma_j^y\sigma_{j+1}^y \right) + \sum_{j=-L}^{L} \omega_j \sigma_j^z 
\end{align}
and with
\begin{align} \label{G_def}
 G_\om(t) := E_\om + \cos (2\Delta t) \left( \sigma_{-1}^x\sigma_{0}^x + \sigma_{1}^x\sigma_{2}^x + \sigma_{-1}^y\sigma_{0}^y + \sigma_{1}^y\sigma_{2}^y \right) + \sin (2\Delta t) \left( \sigma_{-1}^y\sigma_{0}^x\sigma_{1}^z+ \sigma_{0}^z\sigma_{1}^x\sigma_{2}^y -  \sigma_{-1}^x\sigma_{0}^y\sigma_{1}^z - \sigma_{0}^z\sigma_{1}^y\sigma_{2}^x \right)
\end{align}
it is
\begin{align} 
& e^{-itH_\om }  = e^{-it \Delta \sigma_0^z\sigma_1^z} \, \mathcal{T} \left[ e^{ -i \int_0^t ds G_\om(s) } \right] 
\end{align}
The dynamics of the Hamiltonian $ E $ in equation \eqref{def_E}, that is independent from $ \Delta $, gives rise to Anderson localization. In fact it holds:
\begin{align} \label{E_dyn}
\underset{\omega}{\mathbb{E}} \sup_t  \b| [e^{itE} \, A \,e^{-itE},B] \b| \le   2 K \|A\|  \|B\|   \, e^{-\frac{l}{\xi} }
\end{align}
with the constants $ K $ and $ \xi $ as in \eqref{Anderson}. We observe that $ E_\om = E_{1,\{\omega_0,\omega_1 \} } + E_{2,\{\om_j, j\in \{-L,-1\} \}} + E_{3,\{\om_j, j\in \{1,L\} \}} $, splits into three commuting pieces (we drop the disorder index $ \om $):
\begin{align}
& E_1 := J \left( \sigma_0^x\sigma_1^x + \sigma_0^y\sigma_1^y \right) + \omega_0 \sigma_0^z + \omega_1 \sigma_1^z \label{E1} \\
& E_2 := \sum_{j=-L}^{-2} J \left( \sigma_j^x\sigma_{j+1}^x + \sigma_j^y\sigma_{j+1}^y \right) + \sum_{j=-L}^{-1} \omega_j \sigma_j^z \label{E2} \\
& E_3 := \sum_{j=2}^{L-1} J \left( \sigma_j^x\sigma_{j+1}^x + \sigma_j^y\sigma_{j+1}^y \right) + \sum_{j=2}^{L} \omega_j \sigma_j^z \label{E3} 
\end{align}
Then $ e^{itE} = e^{itE_1} e^{itE_2} e^{itE_3} $. To prove \eqref{E_dyn} we need to take into account all the possible relative positions of the supports of $ A $ and $ B $ and the intervals (or more precisely sets): $ [0,1] $, $ [-L,-1] $ and $ [2,L] $. Let us start considering the case where both $ \supp(A) $ and $ \supp(B) $ are inside the interval $ [-L,-1] $, then:
\begin{align} 
\underset{\omega}{\mathbb{E}}   \b| [e^{itE} \, A \,e^{-itE},B] \b| = \underset{\omega}{\mathbb{E}}   \b| [e^{itE_2} \, A \,e^{-itE_2},B] \b|  \le K \|A\|  \|B\|   \, e^{-\frac{l}{\xi} }
\end{align}
If $ \supp(A) $ is inside $ [-L,-1] $ and $ \supp(B) $ overlaps with $ [-L,-1] $ and $ [0,1] $, then: \begin{align} 
& \underset{\omega}{\mathbb{E}}   \b| [e^{itE} \, A \,e^{-itE},B] \b| = \underset{\omega}{\mathbb{E}} \b| [e^{itE_1} e^{itE_2} \, A \,e^{-itE_2} e^{-itE_1},B] \b|  
= \underset{\omega}{\mathbb{E}} \b| [ e^{itE_2} \, A \,e^{-itE_2} ,e^{-itE_1} \, B \, e^{itE_1} ] \b|  
\end{align} 
We define $ B'(t) := e^{-itE_1} \, B \,e^{itE_1} $. 
Denoting $ B_{r}(\supp A) $ the ball of radius $ r:= \dist(\supp A,\supp B) $ around the support of $ A $, and $ B_{r}(\supp A)^c $ its complement in $ [-L,L] $, and remarking that $ \dist(\supp(A),\supp(B)) = \dist(\supp(A),\supp(B'(t))) $, we have: 
\begin{align}
 & \underset{\omega}{\mathbb{E}}  \b| [e^{itE_2} \, A \, e^{-itE_2},B'(t)] \b| \\
 & = \underset{\omega}{\mathbb{E}}  \b| [ e^{itE_2} \, A \, e^{-itE_2} - \frac{1}{2^{|B_{r}(\supp A)^c|}} \Tr_{B_{r}(\supp A)^c} \left( e^{itE_2} \, A \, e^{-itE_2} \right) \otimes \mathds{1}_{B_{r}(\supp A)^c} , B'(t)] \b| \\
 & \le 2 K \|A\|  \|B\|   \, e^{-\frac{r}{\xi} }
\end{align}
All the other possible relative positions of $ \supp(A) $ and $ \supp(B) $ are treated similarly.

We can now apply theorem \ref{Floquet_Perturbation}
taking as the unperturbed Hamiltonian $ E $ the Hamiltonian $ E $ defined in \eqref{def_E} and $ G $ as in \eqref{G_def}. 
The statement in theorem \ref{Floquet_Perturbation} generalizes immediately to the case of a sum of perturbations $ \sum_j \l_j(a_j t) C_j $ with the $ C_j $ such that the union of their supports is a bounded set (namely it does not scale with the system's size). Then we set: 
\begin{align}
 &\l_1(a_1 s) := \cos (2\Delta s) \\
 & C_1 := J  \left( \sigma_{-1}^x\sigma_{0}^x  + \sigma_{-1}^y\sigma_{0}^y + \sigma_{1}^x\sigma_{2}^x  + \sigma_{1}^y\sigma_{2}^y \right) \label{C1} \\
 &\l_2(a_2 s):= \sin (2\Delta s) \\ 
 & C_2 := J \left( (\sigma_{-1}^y\sigma_{0}^x - \sigma_{-1}^x\sigma_{0}^y)\sigma_{1}^z + ( \sigma_{0}^z(\sigma_{1}^x\sigma_{2}^y  - \sigma_{1}^y\sigma_{2}^x ) \right) \label{C2}
\end{align}
We need to evaluate $ \int_0^t ds \, |\cos(2\Delta s)| $. The period of $ |\cos(2 \Delta s)| $ is $ \frac{\pi}{2 \Delta} $; we consider $ n \in \mathds{N} $ such that $  $. Then:
\begin{align}
 \int_0^t ds \, |\cos(2\Delta s)| \le
\begin{cases} \label{t_delta}
 & \frac{\sin(2 \Delta t)}{2 \Delta}, \hspace{2mm} \textrm{with} \hspace{2mm} \ t \le \frac{\pi}{4 \Delta} \\
 & \frac{n}{2\Delta} \hspace{2mm}, \, \textrm{when} \hspace{2mm} \ \exists n > 1, n \in \mathds{N}:  (n-1) \frac{\pi}{4 \Delta} < t \le n \frac{\pi}{4 \Delta} 
\end{cases}
\end{align}
If the second condition in \eqref{t_delta} applies, it also holds:
\begin{align}
 \int_0^t ds \, |\cos(2\Delta s)| \le \frac{n}{2\Delta} < \frac{2}{\pi}t + \frac{1}{2\Delta} 
\end{align}
The same upper bound \eqref{t_delta} applies to $ \int_0^t ds \,|\sin(2\Delta s)| $. In the time scale $ t \le \frac{\pi}{4 \Delta} $ this follows from $ 1-\cos x \le \sin x $ with $ x \in [0,\frac{\pi}{2}] $.

We now apply the upper bound \eqref{direct} of theorem \ref{Floquet_Perturbation} to the Hamiltonian \eqref{Ham} with $ f=2K $, $ \lambda = 1 $, $ \| C \| = \Delta $. This gives:
\begin{align} 
\underset{\omega}{\mathbb{E}}   \b| [e^{itH_{\omega}} \, A \,e^{-itH_{\omega}},B] \b| \le    K \|A\|  \|B\|   \, e^{-\frac{l}{\xi} } + 2K \|A\|  \|B\|   \, \Delta \, t \, e^{-\frac{r}{\xi} }
\end{align}
We can also apply \eqref{direct} of \ref{Floquet_Perturbation} to the Hamiltonian $ G_\om(t) $ of equation \eqref{G_def}.

Assuming that $ \supp(A) $ does not overlap with $ \{0,1\} $, in such a way the it commutes with $ e^{-i \Delta \sigma_0^z \sigma_1^z t} $ and the Heisenberg picture of $ A $ with respect to $ H_\om $ of \eqref{Ham} coincides with the Heisenberg picture of $ A $ with respect to $ G_\om (t) $. This is not a restrictive assumption because if $ A $ has an overlapping support with $ \{0,1\} $ we would define $ A'(t) := e^{i \Delta \sigma_0^z \sigma_1^z t} A e^{-i \Delta \sigma_0^z \sigma_1^z t} $ with perhaps a larger support than $ A $ and apply \eqref{direct} of \ref{Floquet_Perturbation} to $ A'(t) $. In this case we have: 
\begin{align} 
\underset{\omega}{\mathbb{E}}   \b| [e^{itH_{\omega}} \, A \,e^{-itH_{\omega}},B] \b| \le    
\begin{cases}
& K \|A\|  \|B\|   \, e^{-\frac{l}{\xi} } +  16 K \|A\|  \|B\|  J \frac{\sin(2 \Delta t)}{2 \Delta} e^{-\frac{d}{\xi} } , \hspace{2mm} \textrm{with} \hspace{2mm} \ t \le \frac{\pi}{4 \Delta} \\
& K \|A\|  \|B\|   \, e^{-\frac{l}{\xi} } + 16 K \|A\|  \|B\|  J \left( \frac{2}{\pi}t + \frac{1}{2\Delta} \right) e^{-\frac{d}{\xi} }, \nonumber \\ & \textrm{when} \hspace{2mm} \ \exists n > 1, n \in \mathds{N}:  (n-1) \frac{\pi}{4 \Delta} < t \le n \frac{\pi}{4 \Delta} 
\end{cases}
\end{align}

We now apply \eqref{by_part} of theorem \ref{Floquet_Perturbation} to $ G_\om(t) $ of equation \eqref{G_def}:
\begin{align}
 \|[E,C_1]\| & \le J \| [ \sigma_{-1}^x\sigma_{0}^x  + \sigma_{-1}^y\sigma_{0}^y , E_1 + E_2 ] \| + J\| [ \sigma_{1}^x\sigma_{2}^x + \sigma_{1}^y\sigma_{2}^y, E_1 + E_3 ] \| \\
 & \le J \| [ \sigma_{-1}^x\sigma_{0}^x  + \sigma_{-1}^y\sigma_{0}^y \, , \, J \sigma_{0}^x\sigma_{1}^x + J \sigma_{0}^y\sigma_{1}^y +  J\sigma_{-2}^x\sigma_{-1}^x + J\sigma_{-2}^y\sigma_{-1}^y + \om_0 \sigma_0^z + \om_{-1} \sigma_{-1}^z ] \| \nonumber \\
 & + J\| [ \sigma_{1}^x\sigma_{2}^x + \sigma_{1}^y\sigma_{2}^y \, , \, J \sigma_{0}^x\sigma_{1}^x + J \sigma_{0}^y\sigma_{1}^y +  J\sigma_{2}^x\sigma_{3}^x + J\sigma_{2}^y\sigma_{3}^y + \om_1 \sigma_1^z + \om_2 \sigma_2^z] \| \\
 & \le 8 J(J+2\Omega) \label{num}
\end{align}
Equation \eqref{num} can be obtained from the algebra of Pauli matrices and can be checked numerically. The same holds for $ \|[E,C_2]\| = \le 8 J(J+2\Omega) $. With $ \|C_1\| = \|C_2\| = 4J $, it is:
\begin{align}
 \tilde{f}_{C_1}(t) + \tilde{f}_{C_2}(t) \le 16KJ + 32KtJ(J+2\Omega) + 128KJ^2 \left( \frac{2}{\pi}t + \frac{1}{2\Delta} \right)
\end{align}
We then define $ d $ as in equation \eqref{d_def} 
\begin{align} 
 d:= \max \{\dist(\supp A,[-2,3]),\dist(\supp B,[-2,3])\}
\end{align}
Overall it is:
\begin{align} 
\underset{\omega}{\mathbb{E}}   \b| [e^{itH_{\omega}} \, A \,e^{-itH_{\omega}},B] \b| \le    K \|A\|  \|B\|   \, e^{-\frac{l}{\xi} } + K \, \|A\|  \|B\|  \, \frac{J}{\Delta} \, \left( 16 + 32t(J+2\Omega) + 128 J \left( \frac{2}{\pi}t + \frac{1}{2\Delta} \right) \right) \, e^{-\frac{d}{\xi} }
\end{align}


\end{proof}


The bound \eqref{final_large_Delta} holds for a generic position of the supports of the operators $ A $ and $ B $ with respect to the support $ \{0,1\} $ of the perturbation $ \Delta \sigma_0^z \sigma_1^z $. Assuming that the supports of $ A $ and $ B $ are respectively on the left and on the right of $ \{0,1\} $, as we will assume in the corollary \ref{single_ZZ_no_dis}, a better bound arises due to the decomposition of $ E $ in commuting operators $ E = E_1 + E_2 + E_3 $.

\begin{Corollary} \label{left_right}
{\it
Given the setting of lemma \ref{single_ZZ}, under the further assumption that two operators $ A $ and $ B $ are such that $ \supp A \subset [-L,-3] $, $ \supp B \subset [4,L] $, with $ l:= \dist(\supp(A),\supp(B)) $, and the assumption that $ \dist(\supp A,[0,1]) > \dist(\supp B,[0,1])\} $.  It holds:
\begin{align} 
\underset{\omega}{\mathbb{E}}   \b| [e^{itH_{\omega}} \, A \,e^{-itH_{\omega}},B] \b| \le    K \|A\|  \|B\|   \, e^{-\frac{l}{\xi} } + 2K \|A\|  \|B\|   \, \Delta \, t \, e^{-\frac{d}{\xi} }
\end{align}
and also:
\begin{align} \label{large_Delta_left_right}
\underset{\omega}{\mathbb{E}}   \b| [e^{itH_{\omega}} \, A \,e^{-itH_{\omega}},B] \b| \le & 2K \|A\|  \|B\|   \, e^{-\frac{l}{\xi} } + 16K\left( \frac{J}{\Delta} + 4 \frac{J^2}{\Delta^2} \right) \|A\|  \|B\|   \, e^{-\frac{d}{\xi} } \nonumber \\ & + 64K \, \|A\|  \|B\|  \, \frac{J}{\Delta} \, \left( \frac{4}{\pi} J + \Omega \right) \, t \, e^{-\frac{d}{\xi} }
\end{align}
}
\end{Corollary}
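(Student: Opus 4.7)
The plan is to follow the same two-pronged strategy as in Lemma \ref{single_ZZ}, namely applying the direct bound \eqref{direct} of Theorem \ref{Floquet_Perturbation} for the first inequality and the integration-by-parts bound \eqref{by_part} for \eqref{large_Delta_left_right}, but exploiting the geometric hypothesis $\supp A \subset [-L,-3]$, $\supp B \subset [4,L]$ to tighten the prefactor on the unperturbed Anderson contribution. The key observation is that under the decomposition $E_\omega = E_1 + E_2 + E_3$ from \eqref{E1}--\eqref{E3}, the piece $E_1$ is supported on $\{0,1\}$ and therefore commutes with both $A$ and $B$, so
\begin{align*}
 e^{itE_\omega}Ae^{-itE_\omega} = e^{itE_2}Ae^{-itE_2},
 \qquad
 e^{itE_\omega}Be^{-itE_\omega} = e^{itE_3}Be^{-itE_3},
\end{align*}
and since $[E_2,E_3]=0$ one only needs the Anderson L-R bound \eqref{Anderson} applied to $E_2$ (or equivalently $E_3$) alone to obtain $\underset{\omega}{\mathbb{E}}\sup_t \|[e^{itE_\omega}Ae^{-itE_\omega},B]\| \le K\|A\|\|B\|e^{-l/\xi}$ with the clean prefactor $K$ (rather than the $2K$ needed for generic supports in the proof of \eqref{E_dyn}).

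For the first inequality of the corollary I would apply \eqref{direct} directly to $H_\omega$ with $E(t) = H_{\omega,\Delta=0}$, $\lambda(t)=1$, $C = \sigma_0^z\sigma_1^z$, $\|C\|=1$ (so the perturbation strength $\Delta$ appears as $\l(as)\|C\|=\Delta$), and $f(t)=K$ from the sharpened Anderson bound above. Because $\dist(\supp A,\{0,1\})$ and $\dist(\supp B,\{0,1\})$ are both $\ge d$ under the left/right hypothesis, this yields the stated $K\|A\|\|B\|e^{-l/\xi} + 2K\|A\|\|B\|\Delta t\,e^{-d/\xi}$.

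For \eqref{large_Delta_left_right} I would pass to the interaction picture with respect to $e^{-it\Delta\sigma_0^z\sigma_1^z}$ exactly as in \eqref{int_picture_0}--\eqref{G_def}, obtaining the time-periodic Hamiltonian $G_\omega(t)$. Since $\supp A$ and $\supp B$ do not intersect $\{0,1\}$, the dressing unitary $e^{-it\Delta\sigma_0^z\sigma_1^z}$ commutes with both $A$ and $B$, so the commutator under $H_\omega$ equals the one under $G_\omega(t)$. I would then apply \eqref{by_part} of Theorem \ref{Floquet_Perturbation} with the two perturbations $(\lambda_1,C_1)$, $(\lambda_2,C_2)$ of \eqref{C1}--\eqref{C2}, using $f(t)=2K$ for $E_\omega$, the bounds $\|C_i\|=4J$ and $\|[E_\omega,C_i]\|\le 8J(J+2\Omega)$ already derived in \eqref{num}, together with $\sup_{s\in[0,t]}|\Lambda_i(2\Delta s)|/(2\Delta)\le 1/(2\Delta)$ and $\int_0^t|\lambda_i(2\Delta s)|\,ds \le 2t/\pi+1/(2\Delta)$ from \eqref{t_delta}. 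Summing the two contributions and grouping the $t$-independent pieces into the $J/\Delta + 4J^2/\Delta^2$ term and the $t$-linear pieces into the $(4J/\pi + \Omega)\,t$ term reproduces \eqref{large_Delta_left_right}.

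The only real obstacle is bookkeeping: one must verify that the improved prefactor $K$ coming from the commuting decomposition $E_\omega = E_1+E_2+E_3$ propagates correctly through \eqref{by_part} (where $f$ enters both multiplicatively and inside $\tilde f_C$) so that no hidden factor of $2$ is lost, and one must check that the support distances $\dist(\supp A,\supp[E_\omega,C_i])$ and $\dist(\supp B,\supp[E_\omega,C_i])$ are indeed bounded below by $d$ as defined in Lemma \ref{single_ZZ}; this is immediate from the fact that $C_1,C_2$ together with their commutators with $E_\omega$ are supported in $[-2,3]$.
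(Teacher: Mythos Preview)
Your approach is correct and suffices to prove the corollary as stated: both displayed inequalities are already special cases of \eqref{first_l} and \eqref{final_large_Delta} from Lemma~\ref{single_ZZ}, and your plan of applying \eqref{direct} with $E=H_{\omega,\Delta=0}$, $f=K$, $\|C\|=\Delta$ for the first bound and \eqref{by_part} with $E=E_\omega$, $f=2K$, $C_1,C_2$ as in \eqref{C1}--\eqref{C2} for the second bound is exactly the Lemma~\ref{single_ZZ} argument specialized to these supports. The bookkeeping obstacles you flag are not real obstacles here.

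The paper, however, does something genuinely different and sharper. Rather than invoking \eqref{by_part} as a black box, it reopens the integration-by-parts computation and exploits two consequences of the left/right geometry that the black-box application discards. First, since $e^{itE}Ae^{-itE}=e^{itE_2}Ae^{-itE_2}$ is supported in $[-L,-1]$ while $\supp B\subset[4,L]$, the leading commutator $[e^{itE}Ae^{-itE},B]$ vanishes \emph{identically}, so the $2K\,e^{-l/\xi}$ term is in fact zero. Second, and this is the step your proposal does not capture, the paper splits $C_1=C_{1,[-1,0]}+C_{1,[1,2]}$ and $C_2=C_{2,[-1,1]}+C_{2,[0,2]}$ and observes that the right pieces $C_{1,[1,2]}$, $C_{2,[0,2]}$ evolve under $E_1+E_3$, which commutes with $e^{itE_2}Ae^{-itE_2}$; hence they drop out of the inner commutator $[T_E^*(s)C_iT_E(s),\,e^{itE_2}Ae^{-itE_2}]$ before the integration by parts is performed. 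Only the left halves $C_{1,[-1,0]}$, $C_{2,[-1,1]}$ (each of norm $2J$ rather than $4J$) survive, and the subsequent $[E,C_i]$ contributions likewise involve only $E_1+E_2$. The net effect, as the paper remarks after the proof, is an improvement of roughly a factor $\tfrac12$ over the bound \eqref{large_Delta_left_right} actually stated in the corollary. So your route proves the statement; the paper's route proves more.
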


{\it Remark}. We like to compare our result \eqref{large_Delta_left_right} with a previous result given by \cite{Gebert_2022}. Their theorem 3.1 provides a L-R bound for the perturbed dynamics with an amplitude that is proportional to the inverse of the intensity of the perturbations, for perturbations that are supported among the operators $ A $ and $ B $. It appears that our approach, despite similar on one side, employing integration by part, is more general, in fact we do not make any assumption about the absence of degenerate eigenvalues in the spectrum of the perturbations, we allow a generic support (but bounded) of the perturbation $ C $, we also allow for the unperturbed dynamics to be completely generic, giving rise to a L-R bound with a time-dependence $ f(t,s) $ in \eqref{E_L-R}. Also when compared with lemma 4.4 of \cite{Gebert_2022} our overall constant appears to be of at least of two order of magnitude smaller. We will discuss the generalization of our approach to a set of sparse perturbations in section \ref{many_per}, also in that case the comparison of our results with that of \cite{Gebert_2022} holds.


\begin{proof}
Under the assumptions on the positions of the supports of $ A $ and $ B $, and the definition of $ E_1 $, $ E_2 $, $ E_3 $ given in \eqref{E1}-\eqref{E3},  it holds $ [e^{itE}Ae^{-itE},B]=0 $, in fact  $ e^{itE} = e^{itE_1} e^{itE_2} e^{itE_3} $. Then: 
\begin{align}
[e^{itE}Ae^{-itE},B] = [e^{itE_2}Ae^{-itE_2},B] =0
\end{align}
With
\begin{align}
G(t) := E + \cos(2\Delta t) C_1 + \sin(2\Delta t) C_2
\end{align}
$ C_1 $ and $ C_2 $ are given as in equations \eqref{C1},\eqref{C2}, 
\begin{align}
 & C_1 := J  \left( \sigma_{-1}^x\sigma_{0}^x  + \sigma_{-1}^y\sigma_{0}^y + \sigma_{1}^x\sigma_{2}^x  + \sigma_{1}^y\sigma_{2}^y \right) \\
 & C_2 := J \left( (\sigma_{-1}^y\sigma_{0}^x - \sigma_{-1}^x\sigma_{0}^y)\sigma_{1}^z + ( \sigma_{0}^z(\sigma_{1}^x\sigma_{2}^y  - \sigma_{1}^y\sigma_{2}^x ) \right) 
\end{align}
We also denote $ C_{1,[-1,0]} $ the terms of $ C_1 $ supported on $ [-1,0] $ and $ C_{1,[1,2]} $ those supported on $ [1,2] $. $ C_{2,[-1,1]} $ and $ C_{2,[0,2]} $ have an analogous meaning with respect to $ C_2 $.

We now apply theorem \ref{Floquet_Perturbation} to the dynamics generated by the Hamiltonian $ G(t) $ with the reference dynamics generated by $ E $. 
Noticing that $ \{0,1\} = \supp(\sigma_0^z \sigma_1^z) $, we also use the assumption that $ \dist(\supp A,[0,1]) > \dist(\supp B,[0,1])\} $. This means that in the evaluation of $ {\tilde f}_C(t) $ of equation \eqref{I_L-R} we are interested on the second term. Using equations \eqref{on_A},\eqref{before_part}, we have
\begin{align}
 & [e^{itH} \, A \,e^{-itH},B] = [T^*_G(t) \, A \, T_G(t) \, , \, B] \\
 & = [T^*_E(t) \, A \,T_E(t),B] + \big [ i \int_0^t ds T^*_I(s) [T^*_E(s)(\cos(2\Delta t) C_1 + \sin(2\Delta t) C_2)T_E(s), T^*_F(t) \, A \,T_F(t)] T_I(s) , B \big ] \label{114a} 
\end{align}
From the assumption of the position of the support of $ A $, that is $ \supp A \subset [-L,-3] $, it follows that:
\begin{align}
&[T^*_E(s) C_1 T_E(s),T^*_E(t) A T_E(t)] = [T^*_E(s) ( C_{1,[-1,0]} + C_{1,[1,2]} ) T_E(s),T^*_E(t) A T_E(t)] \\ 
& = [e^{is(E_1+E_2)}C_{1,[-1,0]}e^{-is(E_1+E_2)} + e^{is(E_1+E_3)}C_{1,[1,2]}e^{-is(E_1+E_3)},e^{it E_2}Ae^{-it E_2}] \\
& = [e^{is(E_1+E_2)}C_{1,[-1,0]}e^{-is(E_1+E_2)} ,e^{it E_2}Ae^{-it E_2}] \\
& = [e^{isE_2}e^{isE_1}C_{1,[-1,0]}e^{-isE_1}e^{-isE_2} ,e^{it E_2}Ae^{-it E_2}]
\end{align}
Also for $ C_2 $ we have that
\begin{align}
&[T^*_E(s) C_2 T_E(s),T^*_E(t) A T_E(t)] = [T^*_E(s) ( C_{2,[-1,1]} + C_{2,[0,2]} ) T_E(s),T^*_E(t) A T_E(t)] \\ 
& = [e^{is(E_1+E_2)}C_{2,[-1,1]}e^{-is(E_1+E_2)} + e^{is(E_1+E_3)}C_{2,[0,2]}e^{-is(E_1+E_3)},e^{it E_2}Ae^{-it E_2}] \\
& = [e^{is(E_1+E_2)}C_{2,[-1,1]}e^{-is(E_1+E_2)} ,e^{it E_2}Ae^{-it E_2}] \\
& = [e^{isE_2}e^{isE_1}C_{2,[-1,1]}e^{-isE_1}e^{-isE_2} ,e^{it E_2}Ae^{-it E_2}]
\end{align}
We observe that 
\begin{align}
 [T^*_E(t) \, A \,T_E(t),B] = [e^{it E_2}Ae^{-it E_2},B]=0
\end{align}
Then continuing from \eqref{114a}, integrating by part, it holds
\begin{align}
 & [e^{itH} \, A \,e^{-itH},B] \label{127a}  \\
 & = \big [ i \int_0^t ds T^*_I(s) [T^*_E(s)(\cos(2\Delta t) C_1 + \sin(2\Delta t) C_2)T_E(s), T^*_E(t) \, A \,T_E(t)] T_I(s) , B \big ] \\
& = i \frac{\sin(2\Delta s)}{2\Delta} \big[ T^*_I(s) [e^{isE_2}e^{isE_1}C_{1,[-1,0]}e^{-isE_1}e^{-isE_2} ,e^{it E_2}Ae^{-it E_2}] T_I(s), B \big] |_{s=0}^{s=t} \nonumber \\
& - i \frac{\cos(2\Delta s)}{2\Delta} \big[ T^*_I(s) [e^{isE_2}e^{isE_1}C_{2,[-1,1]}e^{-isE_1}e^{-isE_2} ,e^{it E_2}Ae^{-it E_2}] T_I(s), B \big] |_{s=0}^{s=t} \nonumber \\
& - i \int_0^t ds \frac{\sin(2\Delta s)}{2\Delta} \big[ \frac{d}{ds} \Big( T^*_I(s) [e^{isE_2}e^{isE_1}C_{1,[-1,0]}e^{-isE_1}e^{-isE_2} ,e^{it E_2}Ae^{-it E_2}] T_I(s) \Big) , B \big] \nonumber \\
& + i \int_0^t ds \frac{\cos(2\Delta s)}{2\Delta} \big[ \frac{d}{ds} \Big( T^*_I(s) [e^{isE_2}e^{isE_1}C_{2,[-1,1]}e^{-isE_1}e^{-isE_2} ,e^{it E_2}Ae^{-it E_2}] T_I(s) \Big) ,B \big] 
\end{align}
By definition it is $ I(s) := T^*_E(s) \l(as) C T_E(s) = e^{iEs} \big( \cos(2\Delta s) C_1 + \sin(2\Delta s) C_2 \big) e^{-iEs} $, then:
\begin{align}
& \frac{d}{ds} \Big( T^*_I(s) [e^{i(E_1+E_2)s}C_{1,[-1,0]}e^{-i(E_1+E_2)s} ,e^{it E_2}Ae^{-it E_2}] T_I(s) \Big) \nonumber \\
& = i T^*_I(s) \big[I(s),[e^{i(E_1+E_2)s}C_{1,[-1,0]}e^{-i(E_1+E_2)s} ,e^{it E_2}Ae^{-it E_2}]\big] T_I(s) \nonumber \\
& + i T^*_I(s) \big[e^{i(E_1+E_2)s}[ E_1+E_2,C_{1,[-1,0]}]e^{-i(E_1+E_2)s} ,e^{it E_2}Ae^{-it E_2}\big] T_I(s) \\
& = i T^*_I(s) \big[\cos(2\Delta s) e^{i(E_1+E_2)s}C_{1,[-1,0]}e^{-i(E_1+E_2)s} + \sin(2\Delta s)e^{i(E_1+E_2)s}C_{2,[-1,1]}e^{-i(E_1+E_2)s} , \nonumber \\
& [e^{i(E_1+E_2)s}C_{1,[-1,0]}e^{-i(E_1+E_2)s} ,e^{it E_2}Ae^{-it E_2}]\big] T_I(s) \nonumber \\
& + i T^*_I(s) \big[e^{i(E_1+E_2)s}[ E_1+E_2,C_{1,[-1,0]}]e^{-i(E_1+E_2)s} ,e^{it E_2}Ae^{-it E_2}\big] T_I(s) \label{129a}
\end{align}
In \eqref{129a} we have used that only the terms of $I(s)$ supported on $ [-L,0] $ contribute to the commutator. Then the norm of the RHS of \eqref{129a} is upper bounded as:
\begin{align}
 &2\left(|\cos(2\Delta s)| \|C_{1,[-1,0]}]\| + |\sin(2\Delta s)| \|C_{2,[-1,1]}]\| \right) \|[e^{iE_1 s}C_{1,[-1,0]}e^{-iE_1 s} ,e^{i(E_2(t-s)}Ae^{-i E_2(t-s)}] \| \nonumber  \\
 & + \| \big[e^{i E_1 s}[ E_1+E_2,C_{1,[-1,0]}]e^{-i E_1 s} ,e^{i E_2(t-s)}Ae^{-i E_2 (t-s)}\big] \| \\
 & \le 4\left(|\cos(2\Delta s)| \|C_{1,[-1,0]}]\| + |\sin(2\Delta s)| \|C_{2,[-1,1]}]\| \right) \|C_{1,[-1,0]}\| \|A\| K e^{-\frac{\dist(\supp A,\{-1\})}{\xi}} \nonumber  \\
 & + \| [ E_1+E_2,C_{1,[-1,0]}]\| \|A\| K e^{-\frac{\dist(\supp A,\supp[E_2,C_{1,[-1,0]}])}{\xi}} \label{133a} \\
 & \le (40 J^2 + 4\Omega J) \|A\|K e^{-\frac{\dist(\supp A,\{-2\})}{\xi}} 
\end{align}
In \eqref{133a} we have used $ \|C_{1,[-1,0]}]\| = \|C_{2,[-1,1]}]\| = 2J $, $  \| [ E_1+E_2,C_{1,[-1,0]}]\| = 8J^2 + 4\Omega J $.

Going back to \eqref{127a} we have that:
\begin{align}
 & \|[e^{itH} \, A \,e^{-itH},B]\|   \\
 & \le | \frac{\sin(2\Delta s)}{2\Delta} | 2 \| B \| \| \big[C_{1,[-1,0]} ,e^{it E_2}Ae^{-it E_2} \big] \| + | \frac{\cos(2\Delta s)}{2\Delta} | 2 \|B\| \big[ C_{2,[-1,1]},e^{it E_2}Ae^{-it E_2} \big] \| \nonumber \\ 
 & + \int_0^t ds \Big( |\frac{\sin(2\Delta s)}{2\Delta} | + |\frac{\cos(2\Delta s)}{2\Delta} | \Big) (80 J^2 + 8\Omega J) \|A\| \|B\| e^{-\frac{\dist(\supp A,\{-2\})}{\xi}}   \\
& \le 4 \frac{J}{\Delta} K\|A\| \|B\| e^{-\frac{\dist(\supp A,\{-1\})}{\xi}} + \frac{80 J^2 + 8\Omega J}{\Delta} \left( \frac{2}{\pi} \, t + \frac{1}{2\Delta} \right) K \|A\| \|B\| e^{-\frac{\dist(\supp A,\{-2\})}{\xi}} \\
& \le 8 \frac{J}{\Delta} K \|A\| \|B\| \left( \frac{1}{2} + \frac{10J+\Omega}{2\Delta} + (10J+\Omega) \frac{2}{\pi} \, t \right) e^{-\frac{\dist(\supp A,\{-2\})}{\xi}}
\end{align}

\end{proof}

We see that as a consequence on the assumption of the position of the supports of $ A $ and $ B $ and taking into
all the terms arising from $ E $, we are able to improve the bound of roughly a factor $ \frac{1}{2} $.


\subsection{Application II: the XY model with sparse interactions}

In this section we set to zero the random magnetic field in the Hamiltonian \eqref{Ham}, this corresponds to study the $XY$ model (that is free fermions) perturbed by $ \Delta \sigma_0^z, \sigma_1^z $. The application of theorem \ref{Floquet_Perturbation} implies again that in the case of a large local $ZZ$ perturbation the amplitude of the L-R bound is suppressed, becoming proportional to $ \frac{J}{\Delta} $. In the following we also consider, for simplicity, the setting where the operators $ A $ and $ B $ are supported respectively on the left and on the right of the perturbation. The L-R bound of the unperturbed Hamiltonian 
\begin{align}
 H_{XY} = \sum_{j=-L}^{L-1} J \left( \sigma_j^x\sigma_{j+1}^x + \sigma_j^y\sigma_{j+1}^y \right)
\end{align}
is given by
\begin{align} \label{L-R_XY}
\b| [e^{itH} \, A \,e^{-itH},B] \b| \le \|A\|  \|B\|  \, e^{v_{LR}t-l}
\end{align}
with $ l = \dist(\supp A, \supp B) $, and $ v_{LR} = 8eJ $ as given, for example, in \cite{Wang_2020,Toniolo_2024_2}. For this system the function $ f(t) $ of theorem \ref{Floquet_Perturbation} is given by $e^{v_{LR}t} $.

\begin{Corollary} \label{single_ZZ_no_dis}
{\it
Given the Hamiltonian 
\begin{align} \label{Ham_no_dis}
 H = \sum_{j=-L}^{L-1} J \left( \sigma_j^x\sigma_{j+1}^x + \sigma_j^y\sigma_{j+1}^y \right)   + \Delta \sigma_0^z\sigma_1^z  
\end{align}
defined on the Hilbert space $ \bigotimes_{j \in \{ -L,L \}} \mathds{C}^2 $, and two operators $ A $ and $ B $ such that  $ \supp A \subset [-L,-3] $, $ \supp B \subset [4,L] $, with $ l:= \dist(\supp(A),\supp(B)) $, and the assumption that $ \dist(\supp A,[0,1]) > \dist(\supp B,[0,1])\} $.  It holds:  
\begin{align} \label{L-R_dir}
\b| [e^{itH} \, A \,e^{-itH},B] \b| \le \left(1 + \frac{\Delta}{4eJ} \right) \, \|A\|  \|B\|  \, e^{v_{LR}t-d}
\end{align}
Also
\begin{align} \label{L-R_inv}
\b| [e^{itH} \, A \,e^{-itH},B] \b| \le   8 \frac{J}{\Delta} \|A\| \|B\|  e^{v_{LR}t-\dist(\supp A,\{-2\})} 
\end{align}
with $ v_{LR} := 8eJ $ the Lieb-Robinson velocity of the $ XY $ model.
}
\end{Corollary}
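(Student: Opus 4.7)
The plan is to apply Theorem \ref{Floquet_Perturbation} twice, once with a time-independent perturbation (giving \eqref{L-R_dir}) and once in the interaction picture with respect to $e^{-it\Delta\sigma_0^z\sigma_1^z}$ (giving \eqref{L-R_inv}), mirroring the structure of the proofs of Lemma \ref{single_ZZ} and Corollary \ref{left_right} but with the $XY$-model Lieb-Robinson bound $f(t,s) = e^{v_{LR}(t-s)}$ and $\xi = 1$ in place of the Anderson constants. Since the unperturbed $f$ now grows exponentially rather than being a bounded constant, the time integrals in the bounds must be evaluated accordingly, and this is where the factor $1/v_{LR} = 1/(8eJ)$ will arise.

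For the direct bound \eqref{L-R_dir}, I would set $E = H_{XY}$ and take $\lambda \equiv 1$, $C = \Delta\sigma_0^z\sigma_1^z$ in equation \eqref{direct}. With $f(t,s) = e^{v_{LR}(t-s)}$ both candidate integrals in $\bar{f}(t)$ give $\int_0^t ds\, e^{v_{LR}(t-s)} = (e^{v_{LR}t}-1)/v_{LR} \le e^{v_{LR}t}/v_{LR}$. The right-left geometry of $A$ and $B$ relative to $\{0,1\}$ makes the distance $r$ in \eqref{r_def} equal to $d$. Plugging in $v_{LR} = 8eJ$ then yields the prefactor $2\Delta/(2v_{LR}) = \Delta/(4eJ)$, and absorbing the leading Lieb-Robinson term into $e^{v_{LR}t-d}$ using $d \le l$ produces the claimed bound.

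For the inverse bound \eqref{L-R_inv}, I would repeat the interaction-picture computation of Lemma \ref{single_ZZ}: since $[A,\sigma_0^z\sigma_1^z] = [B,\sigma_0^z\sigma_1^z] = 0$ by the support assumption, the Heisenberg evolution of $A$ and $B$ under $H$ equals that under $G(t) = E + \cos(2\Delta t)C_1 + \sin(2\Delta t)C_2$ where $E = H_{XY}$ and $C_1$, $C_2$ are the operators defined in \eqref{C1}, \eqref{C2}. Following Corollary \ref{left_right}, I split $E = E_1 + E_2 + E_3$ into the mutually commuting pieces supported on $\{0,1\}$, $[-L,-1]$, and $[2,L]$; the unperturbed commutator then reduces to $[e^{itE_2}Ae^{-itE_2}, B]$, which is controlled by $e^{v_{LR}t - l}$. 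The two oscillating perturbations are handled by the integration-by-parts bound \eqref{by_part}, with $\sup_s|\Lambda(as)|/a = 1/(2\Delta)$ and $\|C_{1,[-1,0]}\| = \|C_{2,[-1,1]}\| = 2J$. The derivative terms produce $\|[E_1 + E_2, C_{1,[-1,0]}]\| = 8J^2$ (no $\Omega J$ piece now) whose contribution involves $\int_0^t ds\, e^{v_{LR}(t-s)}/(2\Delta) \le e^{v_{LR}t}/(2\Delta v_{LR})$. Using $v_{LR} = 8eJ$, this converts the $J^2$ into a single $J$, and summing the boundary and integral contributions with $\supp[E, C_{1,[-1,0]}] \supset \{-2\}$ yields the prefactor $8J/\Delta$ and the distance $\dist(\supp A, \{-2\})$.

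The only point requiring care is keeping track of constants: because the integrals against $e^{v_{LR}(t-s)}$ no longer scale linearly in $t$ but rather produce a $1/v_{LR}$ factor, the $J^2/\Delta$ and $J^2\Omega/\Delta$ terms that appeared in the disordered case are replaced by $J/\Delta$ terms, giving the clean $\frac{J}{\Delta}$ amplitude of \eqref{L-R_inv} without any residual $t$-linear piece. Ensuring that the various contributions (boundary term of size $4J/\Delta$, double-commutator term of size $\approx 2J/\Delta$, and the $[E,C]$ term of size $\approx 2J/\Delta$) sum to at most $8J/\Delta$ is the main bookkeeping step, but there is no new conceptual obstacle beyond what is already present in Corollary \ref{left_right}.
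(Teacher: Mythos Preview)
Your approach matches the paper's: apply \eqref{direct} directly for \eqref{L-R_dir}, then pass to the interaction picture with respect to $e^{-it\Delta\sigma_0^z\sigma_1^z}$ and rerun the integration-by-parts argument of Corollary \ref{left_right} with $f(t,s)=e^{v_{LR}(t-s)}$ in place of the Anderson constant $K$. Two points need correcting, however. First, after the interaction picture the time-independent generator is \emph{not} $H_{XY}$ but the operator $F$ of \eqref{def_F}, i.e.\ $H_{XY}$ with the $(-1,0)$ and $(1,2)$ bonds removed; those two bonds are precisely what become the oscillating perturbation $\cos(2\Delta t)C_1+\sin(2\Delta t)C_2$, and only this $F$ (not $H_{XY}$ itself) splits into three mutually commuting pieces $F_1+F_2+F_3$. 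Your subsequent use of the split is correct, so this is a slip of identification rather than of method. Second, and more consequential for the stated bound, the unperturbed commutator $[e^{itF_2}Ae^{-itF_2},B]$ is \emph{exactly zero} (since $F_2$ acts on $[-L,-1]$ while $\supp B\subset[4,L]$), not merely ``controlled by $e^{v_{LR}t-l}$''. The paper uses this vanishing to obtain \eqref{L-R_inv} with a pure $J/\Delta$ prefactor and no additive term independent of $\Delta$; if you only bound this commutator rather than set it to zero, your inequality would carry an extra $\|A\|\|B\|e^{v_{LR}t-l}$ that cannot be absorbed into the $8J/\Delta$ term for large $\Delta$.
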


{\it Remark}. It is important to stress that for this system a large $ ZZ $ perturbation modifies the linear light cone only by a logarithmic correction $ \approx \ln \frac{\Delta}{J} $, instead for the case of the Anderson model considered in \ref{L-R} the correction is a multiplicative factor $ \approx \frac{\Delta}{J^2} \, e^{\frac{d}{\xi}} $.


\begin{proof}
The proof of \eqref{L-R_dir} follows directly from \eqref{direct} with $ \lambda = \Delta $, $ \|C\| = 1 $,
and $ \int_0^t ds e^{v_{LR} s} = \frac{1}{v_{LR}}(e^{v_{LR} t}-1) $.

Let us define $ F $ in analogy with \eqref{E1}-\eqref{E3}.
\begin{align} \label{def_F}
 F := J \left( \sigma_0^x\sigma_1^x + \sigma_0^y\sigma_1^y \right) + \sum_{j=-L, j \neq \{-1,0,1\}}^{L-1} J \left( \sigma_j^x\sigma_{j+1}^x + \sigma_j^y\sigma_{j+1}^y \right)  
\end{align}
Under the assumptions on the positions of the supports of $ A $ and $ B $, it holds $ [e^{itF}Ae^{-itF},B]=0 $, in fact $ F=F_1+F_2+F_3 $, with:
\begin{align}
& F_1 := J \left( \sigma_0^x\sigma_1^x + \sigma_0^y\sigma_1^y \right)  \\
& F_2 := \sum_{j=-L}^{-2} J \left( \sigma_j^x\sigma_{j+1}^x + \sigma_j^y\sigma_{j+1}^y \right) \\
& F_3 := \sum_{j=2}^{L-1} J \left( \sigma_j^x\sigma_{j+1}^x + \sigma_j^y\sigma_{j+1}^y \right) 
\end{align}
then $ e^{itF} = e^{itF_1} e^{itF_2} e^{itF_3} $. It follows that: 
\begin{align}
[e^{itF}Ae^{-itF},B] = [e^{itF_2}Ae^{-itF_2},B] =0
\end{align}
We define $ G(t) := F + \cos(2\Delta t) C_1 + \sin(2\Delta t) C_2 $
with $ C_1 $ and $ C_2 $ as given in equations \eqref{C1},\eqref{C2}.

From now on the proof is identical to that of corollary \ref{left_right}, upon replacing $ E_1 $, $ E_2 $ and $ E_3 $ with $ F_1 $, $ F_2 $ and $ F_3 $. Therefore we simply report the final steps where we make explicit the integration in time of $ e^{v_{LR}s} $. We have that:
\begin{align}
 & \|[e^{itH} \, A \,e^{-itH},B]\|   \\
 & \le 4 \frac{J}{\Delta} \|A\| \|B\| e^{v_{LR}(t-s)-\dist(\supp A,\{-1\})} + 80 \frac{J^2}{\Delta} \|A\| \|B\|  \int_0^t ds e^{v_{LR}t-\dist(\supp A,\{-2\})}  \\
 & \le 4 \frac{J}{\Delta} \|A\| \|B\| e^{v_{LR}t-\dist(\supp A,\{-1\})} + 80 \frac{J^2}{\Delta v_{LR}} \|A\| \|B\|  e^{v_{LR}t-\dist(\supp A,\{-2\})} \\
 & \le 8 \frac{J}{\Delta} \|A\| \|B\|  e^{v_{LR}t-\dist(\supp A,\{-2\})} 
\end{align}

\end{proof}


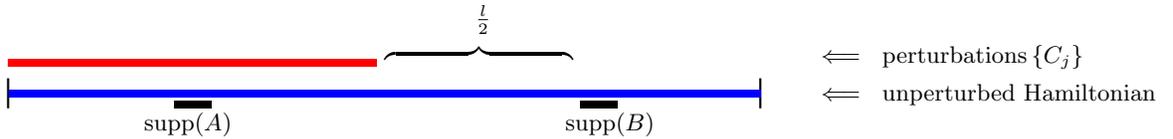
\begin{figure}[h!]
\setlength{\unitlength}{1mm} 

\begin{picture}(180,45)(-30,-40)

\thicklines

\put(-15,-24){\line(0,1){4}}
\put(85,-24){\line(0,1){4}}
\put(-15,-22){\line(1,0){100}}

\linethickness{1mm}

\put(12,-23.5){\line(-1,0){5}}

{\color{red} \put(-15,-18){\line(1,0){49}}}

{\color{blue} \put(-16,-22){\line(1,0){100}}}
\put(65,-23.5){\line(-1,0){5}}

\thinlines

\put(2,-27){$ \supp(A) $}
\put(58,-27){$ \supp(B) $}
\put(34,-18){$\overbrace{\hspace{25mm}}$}
\put(46,-13){$ \frac{l}{2}$}

\put(92,-23){$ \Longleftarrow \hspace{2mm}  \textrm{unperturbed Hamiltonian}$}

\put(92,-18){$ \Longleftarrow \hspace{2mm} \textrm{perturbations} \, \{ C_j \} $}

\end{picture}
\caption{A set of perturbations $ \{ C_j \} $ that cover the support of $ A $ and extends from the left end of the system to roughly half the way in between $ A $ and $ B $. } 
\label{covering}
\end{figure}

\section{From one to many perturbations} \label{many_per}

The theory that we have developed in theorem \ref{Floquet_Perturbation} can be easily extended to the case of sparse perturbations but with the important difference that if for one perturbation $ \lambda(at) C $ in \ref{Floquet_Perturbation} we were able to describe the resulting L-R bounds \eqref{direct} and \eqref{by_part} for every possible positions of the supports of $ A $ and $ B $ relative to $ C $ (up to the trivial cases), when more perturbations are taken into account then the bound will be valid only either for a fixed pair of operators $ A $ and $ B $ and a certain distribution of perturbations with respect to that pair, or viceversa to fixed set of perturbations and to certain pairs of operators $ A $ and $ B $. This follows ideas developed in \cite{Toniolo_2024_1}.

Let us illustrate this with an example. See figure \ref{covering}. We consider the perturbation $ \sum_j \lambda_j(a_j s) C_j $ ($ j $ runs on a countable set and distinguishes the different supports $ \supp(C_j) $), such that each $ C_j $ has a bounded support and $ \bigcup_j \supp(C_j) = \textrm{region L} $ of figure \ref{covering}. Then let us replace in equation \eqref{27} $ C $ with $ \sum_j \lambda_j(a_j s) C_j $

\begin{align} 
 & \| \int_0^t ds T_I(t,s) [T^*_E(s) \sum_j \lambda_j(a_j s) C_j T_E(s), B ] T^*_I(t,s)  \|   \\
 & \le \sum_j  \frac{|\L_j(a_jt)|}{a} \| [T^*_E(t) C_j T_E(t), B ] \| + \int_0^t ds \sum_j \frac{|\L_j(a_js)|}{a}  \Big( 2 \| I(s) \| \| [T^*_E(s) C_j T_E(s), B ] \| + \| \bigskip[T^*_E(s) [E(s), C_j] T_E(s), B \big] \|  \Big)  \label{147}
\end{align}
It is immediate to understand that the convergence of the sum in \eqref{147} is guaranteed by the locality of $ E(s) $ and by the exponential decrease of $ e^{-\frac{\dist(\supp \, B,\supp \, C_j)}{\xi}} $ and $ e^{-\frac{\dist(\supp \, B,\supp [E(s),C_j])}{\xi}} $. The overall decay with distance will be in relation to the closest perturbation to $ B $, therefore of $ O(e^{-\frac{l}{2\xi}}) $.

Another possible setting of ``sparse'' perturbations is given in figure \ref{fig_left_right}. The two regions of perturbations left and right, respectively in red and blue in the picture, are taken into account with a {\it two steps procedure}. In the first step we evaluate the effect of the region of perturbations left on the L-R bound of the unperturbed dynamics $ E $ of theorem \ref{Floquet_Perturbation}. In the second step we consider the dynamics resulting from the first step as the unperturbed one and consider the effect of the perturbations given by region right. The overall decay in distance will be of $ O(e^{-\frac{l}{\xi}}) $.


\begin{figure}[h]
\setlength{\unitlength}{1mm} 

\begin{picture}(180,45)(-30,-40)

\thicklines

\put(-15,-24){\line(0,1){4}}
\put(85,-24){\line(0,1){4}}
\put(-15,-22){\line(1,0){100}}

\linethickness{1mm}

\put(12,-23.5){\line(-1,0){5}}

{\color{yellow} \put(-15,-18){\line(1,0){27}}}
{\color{yellow} \put(62,-18){\line(1,0){22}}}

{\color{blue} \put(-17,-22){\line(1,0){100}}}
\put(65,-23.5){\line(-1,0){5}}

\thinlines

\put(2,-27){$ \supp(A) $}
\put(58,-27){$ \supp(B) $}
\put(10,-29){$\underbrace{\hspace{50mm}}$}
\put(35,-34){$ l$}

\put(92,-23){$ \Longleftarrow \hspace{2mm}  \textrm{unperturbed Hamiltonian}$}

\put(92,-18){$ \Longleftarrow \hspace{2mm} \textrm{perturbations} \, \{ C_j \} $}

\end{picture}
\caption{A set of perturbations $ \{ C_j \} $ that cover the support of $ A $ and extends from the left end of the system to roughly half the way in between $ A $ and $ B $. } 
\label{fig_left_right}
\end{figure}
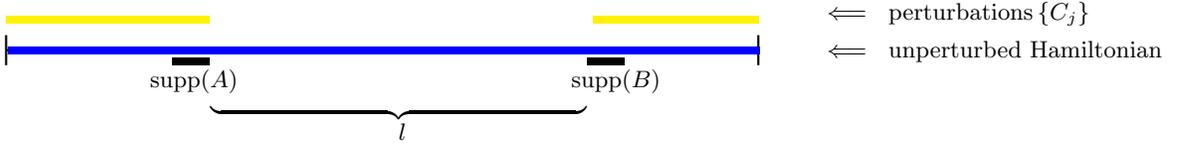





%



%

%

\section*{Appendix}
\appendix

\section{Proof of equation \eqref{comm}} \label{comm_tensor_Pauli}

Considering two finite dimensional Hilbert spaces $ \mathcal{H}_1 $ and $ \mathcal{H}_2 $, with $ A,C : \mathcal{H}_1 \rightarrow \mathcal{H}_1 $ and $ B,D : \mathcal{H}_2 \rightarrow \mathcal{H}_2 $, it is
\begin{align}
 [A \otimes B, C \otimes D] = (A \otimes B) \, (C \otimes D) - (C \otimes D) \, (A \otimes B) = AC \otimes BD - CA \otimes DB   
\end{align}
Then, with $ \mathcal{H}_1 =  \mathcal{H}_2 = \mathds{C}_2 $
\begin{align}
 [\sigma_0^\alpha \otimes \sigma_1^\gamma , \sigma_0^\beta \otimes \sigma_1^\delta] & = \sigma_0^\alpha \sigma_0^\beta \otimes \sigma_1^\gamma \sigma_1^\delta - \sigma_0^\beta \sigma_0^\alpha \otimes \sigma_1^\delta \sigma_1^\gamma \\
& = i \sum_\eta \epsilon_{\alpha \beta \eta} \sigma_0^\eta \otimes i \sum_\eta \epsilon_{\gamma \delta \eta} \sigma_1^\eta - i \sum_\eta \epsilon_{\beta \alpha \eta} \sigma_0^\eta \otimes i \sum_\eta \epsilon_{\delta \gamma \eta} \sigma_1^\eta \\
& = 0
 \end{align}

\section{Proof by induction of the L-R bound for generic nearest neighbour time-dependent Hamiltonians} \label{induction}

In this appendix we present a remarkably simple proof of the L-R bound for generic nearest neighbour (NN) Hamiltonians of a one-dimensional spin system, in general time-dependent. We take a NN Hamiltonian just for simplicity, the result can be generalized to other strictly local Hamiltonians. The L-R bound follows from methods similar to those used in appendix B of \cite{Toniolo_2024_2}, and in turn introduced in \cite{Osborne_slides,Osborne_2007}. For a recent thorough study of L-R bounds see for example \cite{Wang_2020, Lucas_review}.

In analogy to the quantity $ \Delta_j(t) $ \cite{Osborne_slides,Osborne_2007} for time-independent Hamiltonians, we consider with $ T_{H_{\Lambda_{j}}}(t,s) := T\left(e^{-i \int_s^t du H_{\Lambda_{j}}(u) } \right) $ the time-ordered operator of unitary evolution generated by (in general) a time dependent Hamiltonian $ H_{\Lambda_{j}}(u) := \sum_{k=-j}^{j-1}H_{k,k+1})(u) $, the quantity:
\begin{align} \label{delta_t}
 \Delta_j(t,s) &:= \| T^*_{H_{\Lambda_{j+1}}}(t,s) A T_{H_{\Lambda_{j+1}}}(t,s) - T^*_{H_{\Lambda_{j}}}(t,s) A T_{H_{\Lambda_{j}}}(t,s) \| 
\end{align}

\begin{Lemma}
 Given $ \Delta_j(t,s) $ as defined in \eqref{delta_t} for a nearest neighbour Hamiltonian $ H $ of a one-dimensional spin system on the lattice $ \{-L,...,L\} $, $ L \in \mathds{N} $, and Hilbert space $ \mathcal{H}= \otimes_{j\in \{-L,...,L\} } \mathds{C}^d $, and $ A $ supported in $ [-1,1] $, with $ J := \max_{u \in (s,t)} \max_j \{\|H_{j,j+1}(u)\| \} $, it holds:
\begin{align}\label{delta_ind}
 \Delta_j(t,s) \le \|A\| \frac{[4J(t-s)]^j}{j!}
\end{align}

\end{Lemma}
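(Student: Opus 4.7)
The plan is induction on $j$, driven by an interaction-picture identity together with a locality (telescoping) argument. The key structural observation is that the Hamiltonian difference when enlarging the window from $\Lambda_j$ to $\Lambda_{j+1}$ is just the two boundary bonds, $\delta H(u) := H_{\Lambda_{j+1}}(u) - H_{\Lambda_j}(u) = H_{-j-1,-j}(u) + H_{j,j+1}(u)$, whose combined support $\{-j-1,-j,j,j+1\}$ is disjoint from $\Lambda_{j-1}$ as soon as $j \ge 1$. This will allow me to trade a commutator with $\delta H$ for an expression involving $\Delta_{j-1}$.

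First I would pass to the interaction picture relative to $H_{\Lambda_j}$. Defining $W(u)$ by $T_{H_{\Lambda_{j+1}}}(u,s) = T_{H_{\Lambda_j}}(u,s) W(u)$, a direct differentiation shows that $W(s)=\mathds{1}$ and $i \dot W(u) = \tilde{\delta H}(u) W(u)$ with $\tilde{\delta H}(u) := T^*_{H_{\Lambda_j}}(u,s) \delta H(u) T_{H_{\Lambda_j}}(u,s)$. Holding $X := T^*_{H_{\Lambda_j}}(t,s) A T_{H_{\Lambda_j}}(t,s)$ fixed and integrating $\frac{d}{du}[W^*(u) X W(u)] = i W^*(u) [\tilde{\delta H}(u), X] W(u)$ from $s$ to $t$ yields the exact representation
\begin{align}
T^*_{H_{\Lambda_{j+1}}}(t,s) A T_{H_{\Lambda_{j+1}}}(t,s) - T^*_{H_{\Lambda_j}}(t,s) A T_{H_{\Lambda_j}}(t,s) = i \int_s^t du \, W^*(u) \, [\tilde{\delta H}(u), X] \, W(u).
\end{align}
Taking norms and cancelling the unitary conjugations by $W(u)$ and by $T_{H_{\Lambda_j}}(u,s)$ inside the commutator (using $T^*_{H_{\Lambda_j}}(t,s) = T^*_{H_{\Lambda_j}}(u,s) T^*_{H_{\Lambda_j}}(t,u)$), the estimate collapses to
\begin{align}
\Delta_j(t,s) \le \int_s^t du \, \| [\delta H(u),\, T^*_{H_{\Lambda_j}}(t,u) A T_{H_{\Lambda_j}}(t,u)] \|.
\end{align}

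Next I would apply the locality/telescoping step. For $j \ge 1$, the operator $T^*_{H_{\Lambda_{j-1}}}(t,u) A T_{H_{\Lambda_{j-1}}}(t,u)$ is supported inside $\Lambda_{j-1}$ (because $H_{\Lambda_{j-1}}$ acts only there and $\supp A \subset [-1,1] \subset \Lambda_{j-1}$ for $j\ge 2$; the $j=1$ case must be treated separately and will saturate the trivial $2\|A\|$ bound). Since $\Lambda_{j-1}$ does not meet $\supp(\delta H(u))$, this operator commutes with $\delta H(u)$, so subtracting it inside the commutator is free and gives
\begin{align}
\| [\delta H(u), T^*_{H_{\Lambda_j}}(t,u) A T_{H_{\Lambda_j}}(t,u)] \| \le 2 \|\delta H(u)\| \, \Delta_{j-1}(t,u) \le 4J\, \Delta_{j-1}(t,u).
\end{align}
This yields the recursion $\Delta_j(t,s) \le 4J \int_s^t du \, \Delta_{j-1}(t,u)$ for $j \ge 1$.

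Finally I would close the induction. Taking the base estimate either as $\Delta_0(t,s) \le \|A\|$ (after dedicated bookkeeping using $\supp A \subset[-1,1]$) or via the direct Duhamel-bound $\Delta_0(t,s) \le 4J(t-s)\|A\|$ used as seed for $j=1$, iterating the recursion and evaluating the nested integral $\int_s^t (t-u)^{j-1}\,du = (t-s)^j/j$ produces the factorial denominator, delivering $\Delta_j(t,s)\le \|A\|\,[4J(t-s)]^j/j!$. The main technical obstacle is Step~2: deriving the interaction-picture identity cleanly for time-ordered exponentials and making sure that the initial-time and final-time arguments of each $T_{H_{\Lambda_j}}$ are tracked consistently so that, after absorbing the $W$-conjugations and pulling in the split $T^*_{H_{\Lambda_j}}(t,s) = T^*_{H_{\Lambda_j}}(u,s) T^*_{H_{\Lambda_j}}(t,u)$, the commutator on the right really is $[\delta H(u), T^*_{H_{\Lambda_j}}(t,u) A T_{H_{\Lambda_j}}(t,u)]$, i.e., an evolution from $u$ to $t$ purely under $H_{\Lambda_j}$, so that the locality argument of Step~3 applies and the right-hand side is recognizably $\Delta_{j-1}(t,u)$.
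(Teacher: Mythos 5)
Your proof is correct and follows essentially the same route as the paper's: a Duhamel/interaction-picture identity expressing $\Delta_j$ as an integral of a commutator with the two boundary bonds $H_{\Lambda_{j+1}}-H_{\Lambda_j}$, followed by the free insertion of the smaller-window evolved operator (which commutes with the boundary bonds by locality) to produce the recursion $\Delta_j(t,s)\le 4J\int_s^t \Delta_{j-1}(t,u)\,du$, closed by induction from the base case $\Delta_1(t,s)\le 4J(t-s)\|A\|$. The "technical obstacle" you flag about tracking the time arguments of the time-ordered unitaries is handled in your own sketch exactly as in the paper, so there is no gap.
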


\begin{proof}
 The proof is done by induction. Let us first consider, recalling, as in equation \eqref{54}, that for any Hermitian $ H $:
\begin{align} 
 T_H(t,0)=T_H(t,s) T_H(s,0) \Rightarrow T_H(t,s) = T_H(t,0) T^*_H(s,0) \Rightarrow i \frac{d}{ds} T_H(t,s) = - T_H(t,s) H(s)
\end{align}
It then holds:
\begin{align}
 \Delta_j(t,s) = & \| T^*_{H_{\Lambda_{j+1}}}(t,s) A T_{H_{\Lambda_{j+1}}}(t,s) - T^*_{H_{\Lambda_{j}}}(t,s) A T_{H_{\Lambda_{j}}}(t,s)  \| \label{A4}  \\ 
 & = \|  \int_s^t du \, \frac{d}{du} \left[ T^*_{H_{\Lambda_{j+1}}}(u,s) \left( T^*_{H_{\Lambda_{j}}}(t,u) AT_{H_{\Lambda_{j}}}(t,u) \right) T_{H_{\Lambda_{j+1}}}(u,s)  \right] \| \\
  & = \|  \int_s^t du \, i T^*_{H_{\Lambda_{j+1}}}(u,s)  \left[  H_{\Lambda_{j+1}}(u) - H_{\Lambda_{j}}(u) , T^*_{H_{\Lambda_{j}}}(t,u) A T_{H_{\Lambda_{j}}}(t,u) \right] T_{H_{\Lambda_{j+1}}}(u,s)   \| \label{A6}
\end{align}
$ H_{\Lambda_{j+1}}(u) - H_{\Lambda_{j}}(u) = H_{-j-1,-j}(u) + H_{j,j+1}(u) $ are the boundary terms of $ H_{\Lambda_{j+1}}(u) $. 
We start the proof by induction from $ j=1 $. From \eqref{A6} using trivial bounds it follows that:
\begin{align} \label{beginning}
 \Delta_1(t,s) \le \|A\| 4J(t-s)
\end{align}
\eqref{delta_ind} with $ j=1 $ gives \eqref{beginning}.
Let us now assume \eqref{delta_ind} and prove it for $ j+1 $. In analogy to \eqref{A4}-\eqref{A6} we have:
\begin{align}
 \Delta_{j+1}(t,s) = & \|  \int_s^t du \, i T^*_{H_{\Lambda_{j+2}}}(u,s)  \left[  H_{\Lambda_{j+2}}(u) - H_{\Lambda_{j+1}}(u) , T^*_{H_{\Lambda_{j+1}}}(t,u) A T_{H_{\Lambda_{j+1}}}(t,u) \right] T_{H_{\Lambda_{j+2}}}(u,s)   \| \label{A8}
\end{align}
We observe that $ H_{\Lambda_{j+2}}(u) - H_{\Lambda_{j+1}}(u) $, being supported on $ \{-j-2,-j-1,j+1,j+2\} $, commutes with $ T^*_{H_{\Lambda_{j}}}(t,s) A T_{H_{\Lambda_{j}}}(t,s) $. Then inserting this on the RHS of the commutator in \eqref{A8}, it is:
\begin{align}
 \Delta_{j+1}(t,s) =  \|  \int_s^t du \, i T^*_{H_{\Lambda_{j+2}}}(u,s)  \left[  H_{\Lambda_{j+2}}(u) - H_{\Lambda_{j+1}}(u) , T^*_{H_{\Lambda_{j+1}}}(t,u) A T_{H_{\Lambda_{j+1}}}(t,u) - T^*_{H_{\Lambda_{j}}}(t,s) A T_{H_{\Lambda_{j}}}(t,u) \right] T_{H_{\Lambda_{j+2}}}(u,s) \| \label{A9}
\end{align}
With trivial bounds and using \eqref{delta_ind} for the norm of the RHS of the commutator in \eqref{A8}, we get:
\begin{align}
 \Delta_{j+1}(t,s) \le \|A\| \frac{(4J)^{j+1}}{j!}  \int_s^t du (t-u)^j   \label{A10}
\end{align}
We conclude with a change of variable, $ v:=t-u $:
\begin{align}
  \int_s^t du (t-u)^j =  -\int_{t-s}^0 dv v^j = \frac{(t-s)^{j+1}}{j+1}
\end{align}

\end{proof}

\bibliography{bibliography_Single_ZZ_Up}

\end{document}